\documentclass[pra,twocolumn,preprintnumbers,superscriptaddress,amsmath,amssymb]{revtex4-1}
\usepackage{graphicx}
\usepackage{mathrsfs}
\usepackage{amsfonts}
\usepackage{times}
\usepackage{amsmath}
\usepackage{amsthm}
\usepackage{leftidx}
\usepackage{tikz}
\usepackage{tikz-network}
\usepackage{color}
\usepackage{subcaption}
\usepackage{placeins}
\usepackage[colorlinks,linkcolor=blue,citecolor=blue]{hyperref}
\usepackage{cleveref}
\DeclareMathOperator*{\argmax}{argmax}

\newtheorem{definition}{Definition}
\newtheorem{theorem}{Theorem}

\usepackage{mathtools}
\usepackage{subcaption}
\usepackage{graphicx}

\newtheorem{assumption}{Assumption}
\newtheorem{proposition}{Proposition}

\usepackage{algorithm}
\usepackage{algpseudocode}

\DeclareRobustCommand{\orcidicon}{%
	\begin{tikzpicture}
	\draw[lime, fill=lime] (0,0) 
	circle [radius=0.16] 
	node[white] {{\fontfamily{qag}\selectfont \tiny ID}};
	\draw[white, fill=white] (-0.0625,0.095) 
	circle [radius=0.007];
	\end{tikzpicture}
	\hspace{-2mm}
}
\foreach \x in {A, ..., Z}{%
	\expandafter\xdef\csname orcid\x\endcsname{\noexpand\href{https://orcid.org/\csname orcidauthor\x\endcsname}{\noexpand\orcidicon}}
}

\usepackage{caption}
\usepackage{subcaption}

\captionsetup{justification=raggedright, singlelinecheck=false}
\captionsetup[subfigure]{justification=centering}

\begin{document}
\title{A Mutual Information-based Metric for Temporal Expressivity and Trainability Estimation \\ in Quantum Policy Gradient Pipelines}
\author{Jaehun Jeong\orcidA{}}
\email{abin1125@snu.ac.kr}
\affiliation{College of Liberal Studies, Seoul National University, Seoul 08826, South Korea}
\affiliation{Team QST, Seoul 08826, South Korea}
\author{Donghwa Ji\orcidB{}}
\email{donghwa722@gmail.com}
\affiliation{College of Liberal Studies, Seoul National University, Seoul 08826, South Korea}
\affiliation{Team QST, Seoul 08826, South Korea}
\author{Kabgyun Jeong\orcidC{}}
\email{kgjeong6@snu.ac.kr}
\affiliation{Institute of Computer Technology, College of Engineering, Seoul National University, Seoul, 08826, Korea}
\affiliation{Team QST, Seoul 08826, South Korea}

\date{\today}

\begin{abstract}
In recent years, various limitations of conventional supervised learning have been identified, motivating the development of reinforcement learning--and quantum reinforcement learning that leverages quantum resources such as entanglement and superposition. Among the various reinforcement learning methodologies, the policy gradient method is considered to have many benefits; for instance, it allows an agent to learn without explicitly knowing the crucial information of the environment such as state transition probabilities and initial state distribution. Meanwhile, from the perspective of learning, two indicators are often regarded as significant: expressivity and trainability (for gradient-based methods). While a number of attempts have been made to quantify the expressivity and trainability of Neural Network models and PQCs, clear efforts suitable for reinforcement learning settings have so far been lacking, despite the inherent differences between conventional supervised learning and reinforcement learning. Therefore, in this study, we propose revising the notion of expressivity into a temporal expressivity suited to reinforcement learning dynamics, and show that the mutual information between the action distribution and the discretized reward signal provides an upper bound for the scaled gradient norm, while yielding an information-theoretic decomposition and a residual-aware upper bound for the proposed temporal expressivity metric. Finally, under explicit concentration assumptions, we show that MI-TET induces an assumption-based, one-sided prescreening criterion for initialization-time gradient fragility across PQC architectures.
\end{abstract}
\maketitle

\section{Introduction} \label{sec:intro}
Supervised Learning (SL), which is a classical and most popular machine learning paradigm, describes a situation where explicit answer labels are provided for each data samples and the model extracts some correlations between those input data and output labels so that it can infer the appropriate label even for new, unseen data. This SL paradigm has been studied extensively across various domains, especially with the advent of deep learning in the 2010s, which has achieved remarkable success in areas such as image classification and natural language processing \cite{DeepLearning2015}.

Yet, supervised learning exhibits several limitations. Key limitation is that, when attempting to train machines to handle the diverse range of problems existing in the real world, it becomes impractical to provide explicit labels for every possible problem. For instance, consider the case of training a four-legged robot that must learn to walk~\cite{Cs229}. Given the wide variety of physical environments it may encounter--such as stairs, inclined surfaces of varying slopes, or public transportation systems like busses and subways--it would be almost impossible to prescribe explicit labels in advance such as ``adjust \emph{this} component of the body following \emph{this} precise way \emph{right at that time.}'' for every scenario, especially when the infinity of possible real-world conditions is taken into account. Even if possible, such an approach would be highly inefficient. This challenge of providing explicit labels extends beyond just robotics and is a recurring issue across many problem domains.

In this respect, reinforcement learning (RL) has been regarded as a major paradigm to overcome this shortcomings of supervised learning. Instead of requiring deterministic answer labels, RL provides the agent with a reward that reflects how close its chosen action is to an ideal one within a given specific learning environment, and the agent learns accordingly. Among the various pipelines developed under this framework, policy gradient methods, which update policy functions through gradient ascent, stand out. This method is particularly attractive since it allows learning even with only limited information about the environment, and as it is one of the gradient-based methods, it can benefit from the power of existing well-devised optimizers--such as Adam.

With the growing interest in the potential advantages of quantum computing over classical computing, there has been a surge of research attempting to bring reinforcement learning into quantum computation. Policy gradient methods are no exception, as they can be easily understood as one type of variational quantum algorithm. Accordingly, one can relatively readily and intuitively implement such algorithms by designing suitable parameterized quantum circuits (PQCs). While numerous proposals have explored different designs, in this study, we employ the parameterized quantum circuit proposed by \emph{Jerbi et al.}~\cite{ReUploadingPQC}.

From the perspective of learning theory, whether classical or quantum, two indicators are widely recognized as critical: expressivity and trainability. Expressivity, which measures how well a model---or in the quantum case, a PQC---can represent or approximate a wide range of functions, is important as it is related closely to the fundamental bias-variance tradeoff. Trainability, on the other hand, concerns how well gradient-based methods avoid issues such as vanishing(or sometimes, exploding) gradients, and how reliably they converge toward optimal solutions. Since this fundamentally governs the likelihood and stability of successful learning, there is a clear need for suitable quantitative measures of trainability.

Although many efforts have been devoted to quantifying these two indicators---either independently~\cite{HEG, FIS} or jointly~\cite{NewRef}---they have exhibited limitations. In particular, they both are not suitable to capture the dynamic, time-varying nature  that arises from the exploitation-exploration tradeoff, which is central to reinforcement learning. Moreover, many approaches rely on quantifying those indicators only at initialization, for randomly selected parameter values, thereby reflecting merely the potential coverage of functions or overall learning stability, thus not appropriate to account for the inherently dynamic nature of the reinforcement learning paradigm such as currently changing policy function and resulting variability in data distributions.

To address these issues, in this work, we propose a slight different definition of \textbf{`temporal'} expressivity tailored to reinforcement learning and introduce mutual information between the action distribution and the reward-signal distribution as a proxy that can effectively track both that \emph{newly defined} expressivity and trainability over episodes. More concretely, we formalize the relationship between the gradient norm and mutual information, as well as between the \emph{novel} expressivity notion and mutual information, in the form of inequalities. Through this analysis, we demonstrate that mutual information serves as a useful and efficient indicator capable of monitoring the overall trend of both expressivity and trainability in quantum reinforcement learning settings. 

Moreover, note that we can induce a useful screening protocol over various PQC structures just before the learning starts, where we possibly select \emph{initialization-time gradient eliminating} PQC styles using the trainability theorem. In this way, our MI-TET shares some common ground with other conventional expressivity metrics; but again, note that our main focus still lies in temporal tracking.

\section{Preliminary} \label{sec:prelim}
\subsection{Policy Gradient Fundamentals}
We briefly introduce reinforcement learning and the policy gradient methods. The following introduction is based on the material in the introductory notes~\cite{Cs229}.

The Markov Decision Process is commonly used to describe reinforcement learning environments.
\begin{definition} [Markov Decision Process]
    Any reinforcement learning environment can be simplified to a tuple of $(S, A, \{P_{sa}\}, \gamma, R)$, which is given as follows: 
    \begin{itemize}
        \item $S$: set of states, $A$: set of actions, $\{P_{sa}\}$: state transition probabilities. \\
        - The environment shows the agent a certain state, and the agent interacts with the environment by selecting the optimal action according to that. After this, the environment changes the state stochastically depending on the prefixed probability $P_{sa}$.
        \item $\gamma$: discount factor, $\gamma \in \left[0, 1\right]$. \\
        - It controls the trade-off between immediate and future rewards; in continuing tasks it also helps ensure convergence of the total return.
        \item $R$ : reward function, $S \times A \rightarrow \mathbb{R}$. \\
        - It provides a scalar reward signal used to evaluate state--action pairs.
    \end{itemize}
    Also introduce policy function $\pi$ : $S \rightarrow \Delta (A)$, which plays a significant role in policy gradient. \\
    - It gets input of a current state, and outputs the probability distribution of the actions with that fixed state.
\end{definition}
With this mathematical framework, the primary objective in this reinforcement learning regime becomes maximizing the expected total payoff, which is
\begin{equation*}
    \mathbb{E}\left[\sum_{t = 0}^{T - 1} \gamma^t R(s_t, a_t)\right]
\end{equation*} and can be seen as our objective function, 
by fine-tuning the policy function $\pi$. Then the following question naturally arises. \emph{How we maximize this objective function?}

There are many algorithms to do this job, such as well-known value function iteration and policy iteration, but in this work, as mentioned earlier, we will focus on a gradient-based one, policy gradient \cite{Cs229,WilliamsREINFORCE1992,SuttonPolicyGradient1999}. For this, there are some assumptions required to be satisfied.
\begin{itemize}
    \item We only consider the case with finite horizon $T$, which signifies the game definitely, in a finite timestamp, ends.
    \item Our policy function is actually a randomized policy $\pi_\theta$ with parameterized by trainable $\theta$; hence the policy function is also trainable.
\end{itemize}

Moreover, we introduce some useful definitions and notations here: 
\begin{itemize}
    \item Trajectory $\tau$: $(s_0, a_0, \cdots, s_{T - 1}, a_{T - 1}, s_T)$ \\
    - It can be viewed as \emph{learning history}; $s_T$ is for the place-holder, given the state transition flow of the environment. What really matters is $(s_0, a_0, \cdots, s_{T - 1}, a_{T - 1})$.
    \item Distribution of the trajectories $\tau$ : $P_\theta(\tau)$.
    \item The total payoff with the fixed, specific trajectory $\tau$ : $f(\tau) = \sum_{t = 0}^{T - 1} \gamma^t R(s_t, a_t)$.
\end{itemize}

Then, our objective function can be rewritten by
\begin{equation}
    \eta(\theta) := \mathbb{E}_{\tau \sim P_\theta}\left[f(\tau)\right] = \int P_\theta(\tau) f(\tau) d\tau
\end{equation}
and apply the gradient ascent method to this and update $\theta$ accordingly. One observation is that
\begin{align}
    \nabla \eta(\theta) & = \int (\nabla P_\theta(\tau)) f(\tau) d\tau \\ &= \int (\nabla \log P_\theta(\tau) P_\theta(\tau)) f(\tau) d\tau \\
    & = \mathbb{E}\left[(\nabla \log P_\theta(\tau)) f(\tau)\right].
\end{align}
Next step is to calculate $\nabla \log P_\theta (\tau)$. Note that
\begin{equation}
    P_\theta(\tau) = \mu(s_0) \prod_{t=0}^{T - 1} \pi_\theta(a_t | s_t) P(s_{t + 1} | s_t, a_t),
\end{equation}
where $\mu$ is a distribution of states that gives us the probability of a certain state being chosen as the initial state. Then we can consider the log-derivatives,
\begin{align}
\log P_\theta(\tau) 
    &= \log \mu(s_0) \\
    &+ \sum_{t=0}^{T - 1} \left(\log \pi_\theta(a_t | s_t) + \log P(s_{t + 1} | s_t, a_t)\right), 
\end{align}
and therefore,
\begin{equation}
    \nabla \log P_\theta(\tau) = \sum_{t = 0}^{T - 1} \nabla \log \pi_\theta(a_t | s_t)
\end{equation}
where we used that $\mu$ and $P(s_{t + 1} | s_t, a_t)$ do not depend on $\theta$.
One interesting fact here is that we do not need explicit knowledge of the initial-state distribution $\mu$ or the transition kernel $P(s' | s, a)$. This is one reason why policy gradient methods are attractive when the environment dynamics are unknown.

Substituting the equation (6) into (3), we get
\begin{equation}
    \nabla \eta(\theta) = \mathbb{E}\left[\left(\sum_{t=0}^{T - 1} \nabla \log \pi_\theta (a_t | s_t)\right)\left(\sum_{t = 0}^{T - 1} \gamma^t R(s_t, a_t)\right)\right].
\end{equation}
Moreover, we can easily prove that
\begin{equation}
    \mathbb{E}\left[\nabla \log \pi_\theta(a_t | s_t) \cdot \sum_{j < t} \gamma^j R(s_j, a_j)\right] = 0
\end{equation}
and the equation (8) now simplifies to
\begin{equation}
    \nabla \eta(\theta) = \mathbb{E}\left[\sum_{t = 0}^{T - 1} \left(\nabla \log \pi_\theta(a_t | s_t) \cdot \left(\sum_{j = t}^{T - 1} \gamma^j R(s_j, a_j)\right)\right)\right].
\end{equation}

Let us introduce the following notations, 
\begin{equation}
    G_t := \sum_{k = 0}^{T - t - 1} \gamma^k R_{t + k}.
\end{equation}
Then,
\begin{align}
    &\mathbb{E}\left[\nabla \log \pi_\theta(a_t | s_t) \cdot G_t\right] \nonumber\\ & = \mathbb{E}\left[\mathbb{E}\left[\nabla \log \pi_\theta (a_t | s_t) \cdot G_t | s_t, a_t\right]\right] \\
    & = \mathbb{E}\left[\nabla \log \pi_\theta (a_t | s_t) \cdot \mathbb{E}\left[G_t | s_t, a_t\right]\right].
\end{align}
Now we see the form like the conventional Q function, that is,
\begin{equation}
    Q_t^\pi (s, a) = \mathbb{E}\left[\sum_{k=0}^{T - t - 1} \gamma^{k} R_{t + k} \middle| s_t = s, a_t = a\right].
\end{equation}
Thus, it again gets rephrased as
\begin{equation}
    \mathbb{E}\left[\nabla \log \pi_\theta (a_t | s_t) \cdot G_t\right] = \mathbb{E}\left[\nabla \log \pi_\theta(a_t | s_t) \cdot Q_t^{\pi_\theta} (s_t, a_t)\right].
\end{equation}

Now, 
\begin{align}
    \nabla \eta(\theta) & = \mathbb{E}\left[\sum_{t = 0}^{T - 1} \nabla \log \pi_\theta(a_t | s_t) \cdot \gamma^t G_t\right] \\
    & = \sum_{t = 0}^{T - 1} \gamma^t \mathbb{E}\left[\nabla \log \pi_\theta(a_t | s_t) \cdot G_t\right] \\
    & = \sum_{t = 0}^{T - 1} \gamma^t \mathbb{E}\left[\nabla \log \pi_\theta(a_t | s_t) \cdot Q_t^{\pi_\theta} (s_t, a_t)\right].
\end{align}
Consequently, the choice between $G_t$ and $Q_t^{\pi_\theta}$ for gradient estimation is interchangeable within the expected policy gradient term \cite{SuttonPolicyGradient1999}, and we will use the notation $Y_t \in \{G_t, Q_t^{\pi_\theta}(s_t, a_t)\}$ from now on this work.

\subsection{Quantum Policy Gradient Framework}
One highlighted approach in the Noisy Intermediate-Scale Quantum (NISQ) era is the so called hybrid quantum machine learning approach, where hypothesis family for a given learning task is realized by parameterized quantum computation, with extra classical optimization algorithm acting on it. More practically, those quantum parts are generally given as parameterized quantum circuits (PQCs), which have been proved to solve various learning tasks efficiently.

The whole quantum circuit, which is shown in Fig. 1 of~\cite{ReUploadingPQC} is characterized by trainable parameters $(\phi, \lambda, \omega)$, which are rotation angles, scaling parameters, and the weight of the Hermitian measurement operators, respectively. Note that the state is not encoded at once, rather being reuploaded multiple times during the whole circuit evaluation, with the scaling parameters $\lambda$.

And the whole quantum policy gradient algorithm we adopted from the~\cite{ReUploadingPQC} is given as 
\begin{algorithm}[H]
\caption{REINFORCE with PQC policies and value-function baselines}
\label{alg:pqc_reinforce}
\begin{algorithmic}[1]
    \State \textbf{Input:} a PQC policy $\pi_{\theta}$; a value-function approximator $\tilde{V}_{\omega}$
    \State Initialize parameters $\theta$ and $\omega$; 
    \While{True}
        \State Generate $N$ episodes $\{(s_{0},a_{0},r_{1},...,s_{H-1},a_{H-1},r_{H})\}_{i}$ following $\pi_{\theta}$
        \For{episode $i$ in batch}
            \State Compute the returns $G_{i,t} \leftarrow \sum_{t^{\prime}=1}^{H-t}\gamma^{t^{\prime}}r_{t+t^{\prime}}^{(i)}$ 
            \State Compute the gradients $\nabla_{\theta}\log \pi_{\theta}(a_{t}^{(i)}|s_{t}^{(i)})$ 
            \State Fit $\{\tilde{V}_{\omega}(s_{t}^{(i)})\}_{i,t}$ to the returns $\{G_{i,t}\}_{i,t}$; 
        \EndFor
        \State Compute $\Delta\theta \leftarrow \frac{1}{N}\sum_{i=1}^{N}\sum_{t=0}^{H-1}\nabla_{\theta}\log \pi_{\theta}(a_{t}^{(i)}|s_{t}^{(i)})(G_{i,t} - \tilde{V}_{\omega}(s_{t}^{(i)}))$ ; 
        \State Update $\theta \leftarrow \theta + \alpha\Delta\theta$; 
    \EndWhile
\end{algorithmic}
\end{algorithm}
\noindent which is quite similar to what have been introduced earlier in policy gradient statements, except for the part of the baseline. It is often used to reduce the variance of gradient estimation, and as stated in \Cref{alg:pqc_reinforce}, the value-function approximator is one of the popular choices~\cite{Cs229, baseline1}.

\subsection{Concepts of Expressivity and Trainability}
In both classical and quantum settings, two concepts are particularly important for gradient-based learning: expressivity and trainability. From an approximation-theoretic perspective, learning is defined statistically as the problem of finding the optimal approximation function $\hat{f_n}$ of the real, unknown function $f_\rho$ (e.g. true regression function), within certain function class $\mathcal{F}$ (e.g. neural networks with ReLU activation) given an observed dataset $\mathcal{D} = \{(x_i, y_i)\}_{i=1}^n$~\cite{LearningTheory}. Consider the regression model
\begin{equation}
    \mathbf{y}_i = f_\rho(\mathbf{x}_i) + \epsilon_i, \quad i = 1, 2, \cdots, n.
\end{equation}
Our goal is to find a function $f$ that minimizes the so called population risk $\mathcal{E}(f)$, which is defined as
\begin{equation}
    \mathcal{E}(f) := \mathbb{E}_{(\mathbf{x}, \mathbf{y}) \sim \rho} \left[(\mathbf{y} - f(\mathbf{x}))^2\right],
\end{equation}
where $\rho$ denotes the joint distribution of $\mathbf{x}$ and $\mathbf{y}$. However, it is hardly known the true $\rho$ in most of the cases, thus we instead consider the empirical risk minimizer $\hat{f_n}$ over the dataset $D$,
\begin{align}
    \hat{f_n} &= \text{argmin}_{f \in \mathcal{F}} \mathcal{E}_D(f)\\
    &:= \text{argmin}_{f \in \mathcal{F}} \left[\frac{1}{n} \sum_{i=1}^n (y_i - f(x_i))^2\right].
\end{align}
Now the concept of excess risk is induced, which is $\mathbb{E}_{\mathbf{X} \sim \rho_\chi} \left[(\hat{f_n}(\mathbf{X}) - f_\rho(\mathbf{X}))^2\right]$. It serves as an important statistical metric directly measuring the generalizability of our candidate function $\hat{f_n}$ in the unseen data setting. Interestingly, it is well known that the following inequality holds~\cite{approximation}: 
\begin{align}
    &\mathbb{E}_{\mathbf{X}\sim \rho_\chi} \left[(\hat{f_n}(\mathbf{X}) - f_\rho (\mathbf{X}))^2\right] \nonumber \\ & \leq \frac{\text{Complexity Measure of $\mathcal{F}$}}{n}+ \text{Approx. Error}^2
    \label{eq:my_equation}
\end{align}
The right-hand side can be seen as reflecting a trade-off between the approximation error and the complexity of the function class $\mathcal{F}$. 

This concept is not limited to regression and can be extended to other learning tasks, where it is commonly described as the bias-variance tradeoff. If the model's complexity (i.e., the complexity of $\mathcal{F}$) is too low, the model is prone to underfitting. It is characterized by high bias and highly likely to fail to capture the underlying structure of the data, often leading to poor generalization performance. If the model's complexity is too high, the model is prone to overfitting. This fits the training data very well but performs poorly on unseen data. Such a model suffers from high variance. Moreover, in contrast to classical models where test error typically follows a U-shaped curve (a single descent), modern over-parameterized deep learning models often exhibit the `double descent' phenomenon. This phenomenon reveals that increasing model complexity far beyond the classical overfitting point can lead to a second descent in test error, often achieving performance superior to the original `sweet spot'~\cite{bias-variance}. Thus, quantifying model complexity remains an important task in both classical and modern learning regimes~\cite{ExpressivePower, ExpressivePower2}.

In the literature, the term expressivity is often used without a clear distinction from complexity and capacity, since they all share the same intuitive concept~\cite{DeepLearningBook}. In reinforcement learning, the model outputs probability distribution of action and we argue the deviation of those distributions through multiple evaluations being an important metric. Hence, we use the term expressivity during this paper, which we think literally natural to capture those fundamentals.

As mentioned earlier, since the parameterized quantum circuits take the role of function class $\mathcal{F}$ theoretically, measuring the given circuit's complexity properly also remains significant in the quantum machine learning paradigm. There has been one notable attempt to tackle this issue, shown in the~\cite{HEG}. In this study, the author measures expressivity by the deviation between the output state's fidelity distribution of the target PQC and Haar random circuit. Even though these attempts might work well for the conventional supervised learning tasks, we conceived it is not suitable for reinforcement learning environments, since it just measures the `static', `fixed' models' complexity through multiple times of sampling, while what is really important in RL is the actual `temporal, time dependent fluctuation' of the policy function's expressivity considering the fundamentals of the nonstationary property in RL~\cite{ReUploadingPQC}; this discussion will be further described in the main results section.

Meanwhile, when using gradient descent/ascent, the issue of trainability also becomes one dominant problem. This informal concept of trainability, signifies literally \emph{how well the optimization goal is achieved without any issues such as gradient vanishing/exploding, too much oscillation or even the divergence, which is actually the failure of given learning task}. There have been some papers considering these issues, for instance gradient vanishing or exploding in the classical deep learning models, since it also utilizes gradient information for updating the parameter~\cite{ClassicalTrainability}. With this trainability concept in mind, one can view the Adam optimizer~\cite{Adam}, as the one of the most hopeful optimizers proposed to solve this trainability issue, as well.

In the quantum computation regime, one can easily observe the phenomena of `Barren Plateau', which is a quantum counterpart of the gradient vanishing problem~\cite{BarrenPlateaus}. In that work, the authors examined the empirical results that random circuits might be an unsuitable choice for moderately many qubits circuit while also claiming that the gradient might be concentrated to zero, which is non-utilizable, with exponentially probability decay with the number of qubits. Thus, considering the number of qubits is a fundamental parameter defining the given quantum circuit's compelxity, one can easily find that the expressivity and trainability is correlated to the certain, significant extent; and that motivates us to develop one novel metric which can track both of them simultaneously.

A nice reference for this trainability notion is~\cite{FIS}. The authors utilize the fisher information matrix spectrum, which shows how the eigenvalues of the fisher information matrix are distributed. Intuitively, one can expect the optimization performance of the model which has more concentrated eigenvalues to the zero would be poor and vice versa, since the former might signify more `flat' curvature of the loss landscape; and this is well discussed in the Figure 2 and 3 of that paper. Interestingly, this intuition holds true thus bridging the convergence speed and the fisher information spectrum in a trainability manner. However, this approach (or metric) for trainability is still not feasible for the reinforcement learning environment since it calculates `empirical' fisher information matrix with `randomly' sampled parameter sets and data sets at first glance, again based on the model's static and fixed structure. What is rather important in reinforcement learning is how the gradient or its variance gets concentrated to zero during the whole learning of the agent, which can be viewed to be more `temporal'; and that motivates us to introduce the concept of MI-TET, Mutual Information based Temporal Expressivity and Trainability measure. \\

\section{Main Results}
\subsection{Formulation of MI-TET}
First of all, let us give the formal definition of \textbf{MI-TET}, \emph{Mutual Information-based Temporal Expressivity and Trainability measure}. Before defining MI-TET, we distinguish two notions of time used throughout this main results section. 
The index \(t \in \{0,\dots,T-1\}\) denotes the within-episode time step in a finite-horizon trajectory, whereas \(Z \in \{1,\dots,N\}\) denotes the index of a recent policy snapshot (or sampling round) used in the temporal expressivity analysis. 
For the trainability results, it is also convenient to use the time-augmented state \(\bar S_t := (t,S_t)\). We first define the fixed-policy version of MI-TET, which is the quantity used in the trainability results. 
Since we work in a finite-horizon setting, we aggregate time steps by introducing an auxiliary index \(J\) over \(\{0,\dots,T-1\}\) with weights proportional to \(\gamma^t\).
\begin{definition} [Instantaneous MI-TET] 
    Let \(Y_t \in \{G_t,\; Q_t^{\pi_\theta}(S_t,A_t)\}\). 
Assume \(Y_{\min} \le Y_t \le Y_{\max}\), and fix a bin count \(B \in \mathbb{N}\). 
Define
\[
\Delta := \frac{Y_{\max}-Y_{\min}}{B} > 0,
\]
and partition \([Y_{\min},Y_{\max}]\) into intervals
\[
I_k := [b_k,b_k+\Delta), \qquad k=1,\dots,B-1,
\]
with \(b_1=Y_{\min}\), \(b_{k+1}=b_k+\Delta\), and \(I_B := [b_B,Y_{\max}]\). 
For each bin, define its midpoint
\[
m_k := b_k + \frac{\Delta}{2}, \qquad k=1,\dots,B.
\]
For each \(t\), let \(\tilde Y_t = k\) if \(Y_t \in I_k\), and define \(g(\tilde Y_t):=m_{\tilde Y_t}\).

Now define
\[
c_t := \frac{\gamma^t}{\sum_{j=0}^{T-1}\gamma^j}, \qquad t=0,\dots,T-1,
\]
and let \(J\) be an auxiliary random index with \(\Pr(J=t)=c_t\). 
Then define
\[
\bar S := (J,S_J), \qquad A := A_J, \qquad \tilde Y := \tilde Y_J.
\]
The instantaneous MI-TET is defined by
\[
\mathrm{MI\text{-}TET}_{\mathrm{inst}}(\theta) := I(A;\tilde Y \mid \bar S).
\]
\end{definition}
\noindent In short, \(\mathrm{MI\text{-}TET}_{\mathrm{inst}}(\theta)\) measures the conditional mutual information between the action and the discretized reward-related signal at the current policy parameter \(\theta\), after averaging across time steps through the auxiliary index \(J\). 
This is the version of MI-TET that appears in the trainability theorem below. 

First, one might ask why this discretization process is necessary. The primary motivation is to ensure a robust and computationally simple metric. In many reinforcement learning tasks, the action is often given to be discrete while the $Y$ is often continuous, instead; this is because the reward function $R$ is often taken to be continuous. If we were to use the continuous variable $Y$ directly, calculating the mutual information $I(A; Y | \bar S)$ would first require us to estimate the underlying continuous probability density (e.g. $p(y | a, \bar s)$). This step is not reachable by itself, rather forcing us to employ complex methods like Kernel Density Estimation (KDE) or other function approximation tools such as Neural Networks. Whichever method we use, this would add significant computational overhead for estimating $I(A; \Tilde{Y} | \bar S)$, thus making the online tracking of MI-TET unfeasible.

Moreover, even if $Y$ is given to be discrete, there exist some chances that $A$ and $Y$ might not \emph{match} that well; for instance, $Y$ is already divided into much more segments than $A$ is. This would make the given value of the mutual information become almost near to zero, since only sparse number of samples might be included into each bin of the histogram. 

Therefore, we employ a deliberate, a priori discretization of $Y$ into $\Tilde{Y}$. This process is mainly achieved by introducing the bin count $B$ as a crucial hyperparameter. If we set $B$ too large, we replicate the second problem mentioned above: the resulting histogram becomes too sparse, and the estimated mutual information $I(A; \Tilde{Y} | \bar S)$ may become vanishingly small. Conversely, if $B$ is set too small (e.g. $B = 1$), $\Tilde{Y}$ becomes a constant, and the mutual information trivially becomes zero, losing all the information. Yet, it is still \emph{better than nothing}; by properly selecting $B$ in advance, we aim to adjust the mutual information to a reasonably interpretable level. 

Second, the vast motivation of MI-TET itself. Intuitively, the ultimate goal of reinforcement learning can be simplified into the following one sentence: \emph{Let's increase the volume of what I expect to get by adjusting what I do.} This naturally motivates us to think \emph{What if we just quantify how much information the action distribution itself has about the reward distribution?} and as a mathematical way, one might easily choose mutual information for that purpose. Moreover, it is reinforcement learning-friendly, which almost every of the conventional metrics does not have as a property. There are two words, \textbf{exploration} and \textbf{exploitation} when we try to summarize the typical reinforcement learning process. Exploration prescribes the agent's behavior where it intentionally attempts a new action randomly to find a better policy, whereas exploitation signifies rather agents concentrating on selecting the most \emph{good looking} behavior learned to date when choosing the new action. In most reinforcement learning processes, the agent first tries exploration for most of the time, while gradually increasing the rate of exploitation with a decrease in the portion of exploration~\cite{sutton2018reinforcement}. In the reference~\cite{ReUploadingPQC}, we must carefully look at the role of $\beta$ in the \emph{softmax}-PQC, which is called an inverse-temperature parameter. This is the very parameter controlling \emph{greediness}, i.e. the dynamic of those two concepts, exploration and exploitation, and in the work the authors said they employed a linear annealing schedule starting from 1 and increasing up to the final $\beta$ (task-specific, prefixed value). The proposed metric, MI-TET, then can be understood with this $\beta$, showing the following relationships: 
\begin{enumerate}
    \item \textbf{During effective exploration} (low $\beta$, stochastic policy), MI-TET could be relatively high. \item \textbf{During exploitation} (high $\beta$, policy converges), the agent's policy becomes deterministic, i.e. $\pi(A | S) \rightarrow 1.0$ for the best action. As the policy's action entropy $H(A | \bar S)$ drops to zero, the mutual information $I(A; \Tilde{Y} | \bar S)$ must also drop to zero with trivial inequality $I(A; \Tilde{Y} | \bar S) \leq H(A | \bar S)$.
\end{enumerate} 
Thus by using the MI-TET, one can quantify and track how much the action distribution space the policy spews out narrows and gradually gets `concentrated' over time, which are both quite important in analyzing the learning performance. \\

\subsection{Theorems on Trainability and Expressivity}
What is interesting with this MI-TET is that we can upper bound some values that are related to trainability and temporal expressivity, respectively, with the use of MI-TET. First, let us introduce some theorems of trainability, which give us the upper bound for the norm of scaled gradients. To do so, one needs to understand what one-shot game is.

\begin{definition}[\emph{Informal}, One-shot game]
    One-shot game is basically the same as the case when the finite horizon $T$ is equal to one; that is, the state is fixed for $s$ and the agents picks the action $a \in A$ based on the policy $\pi_\theta (a | s) =: \pi_\theta(a)$. As a result, it gets a reward $R(s, a) =: R(a) \in \mathbb{R}$. Our job is to find an optimal parameter $\theta^*$ defined as
    \begin{equation}
        \theta^* = \argmax_\theta \mathbb{E}_{a \sim \pi_\theta}\left[R(a)\right].
    \end{equation}
    By playing this one-shot, immediate games multiple times, we aim to get closer to the optimal parameter $\theta^*$.
\end{definition}

Our proof for the trainability theorem begins with giving a proof for the simplified version, trainability theorem for just the one-shot game, and inducing that result further to much broader, generalized cases. We even discriminate the easy cases within two subcases, where the one with original $Y$ and discretized $\Tilde{Y}$.


\begin{theorem}[Trainability Theorem (1) - One-shot game, non-discretized]\label{thm:1}
    Let our objective function $\eta_s(\theta) := \mathbb{E}\left[R(a)\right]$, and define the score function $S_\theta(a)$ as 
    \begin{equation}
        S_\theta(a) := \nabla_\theta \log \pi_\theta(a).
    \end{equation}
    Let $g_s(\theta)$ denote the gradient of our objective function $\eta_s$ and assume
    \begin{equation}
        ||S_\theta(a)|| \leq G_{\max}, \quad |R(a)| \leq R_{\max}.
    \end{equation}
    Then the following holds: 
    \begin{equation}
        ||g_s(\theta)|| \leq \sqrt{2}G_{\max}\sqrt{
        \operatorname{Var}(R)
        }\sqrt{I(A; R)}.
    \end{equation}
\end{theorem}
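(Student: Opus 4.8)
The plan is to start from the exact policy-gradient expression for the one-shot game and transport it into an information-theoretic quantity by way of Pinsker's inequality. By the log-derivative identity already established in the preliminary section, the gradient is $g_s(\theta) = \mathbb{E}_{a \sim \pi_\theta}[S_\theta(a) R(a)]$. The structural fact I would exploit is that the score function is centered, $\mathbb{E}_{a}[S_\theta(a)] = \sum_a \nabla_\theta \pi_\theta(a) = \nabla_\theta 1 = 0$. Conditioning on the reward value $r$ and inserting this zero inside the $r$-sum, I would rewrite
\begin{equation}
    g_s(\theta) = \sum_r p(r)\, r \sum_a \bigl(p(a \mid r) - p(a)\bigr) S_\theta(a),
\end{equation}
so that the inner sum is now driven by the gap between the reward-conditioned action law $p(\cdot \mid r)$ and the marginal $p(\cdot) = \pi_\theta$.

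Next I would pass to norms and peel off the two uniform bounds. The triangle inequality together with $\|S_\theta(a)\| \le G_{\max}$ and $|r| \le R_{\max}$ yields
\begin{equation}
    \|g_s(\theta)\| \le R_{\max} G_{\max} \sum_r p(r)\, \|p(\cdot \mid r) - p(\cdot)\|_1 .
\end{equation}
At this point the $L^1$ distance between the conditional and marginal action distributions is exactly the object controlled by Pinsker's inequality, $\|p(\cdot \mid r) - p(\cdot)\|_1 \le \sqrt{2\, D_{KL}(p(\cdot \mid r) \,\|\, p(\cdot))}$, which is what produces the factor $\sqrt{2}$. Finally, recognizing that $\sum_r p(r)\, D_{KL}(p(\cdot \mid r)\,\|\,p(\cdot)) = I(A;R)$ and applying Jensen's inequality to the concave square root to move the reward-average inside the radical, I would arrive at $\|g_s(\theta)\| \le \sqrt{2}\, G_{\max} R_{\max} \sqrt{I(A;R)}$, as claimed.

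The main obstacle is not any single estimate but assembling the decomposition so that the three tools line up. In particular, the baseline insertion must be performed after conditioning on $r$ (subtracting $p(a)$ inside the $r$-sum), and one must verify that the discarded term vanishes because it factors as $\bar r \cdot \mathbb{E}_a[S_\theta(a)] = 0$. A secondary point worth flagging is that in the one-shot game $R$ is a deterministic function of $a$, so $p(\cdot \mid r)$ is simply the renormalized restriction of $\pi_\theta$ to the level set $\{a : R(a) = r\}$; the argument nonetheless goes through verbatim using the reward distribution induced by $\pi_\theta$, and it is precisely the pairing of Pinsker with the concavity (Jensen) step that yields the advertised $\sqrt{2}\,\sqrt{I(A;R)}$ dependence rather than a coarser variance-type bound.
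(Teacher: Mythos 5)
Your proof is correct and reaches the stated bound with the right constant, but it arranges the key steps differently from the paper. The paper first writes $g_s(\theta)=\mathrm{Cov}(S_\theta(A),R(A))$, dualizes the vector norm via $\|v\|=\sup_{\|u\|=1}\langle u,v\rangle$ so that everything becomes scalar, bounds the resulting covariance by $2\|h\|_\infty\, TV(p,q)$ where $p$ is the joint law of $(A,R)$ and $q$ the product of its marginals, and then applies Pinsker \emph{once} to the joint, using $D_{KL}(p\|q)=I(A;R)$. You instead keep the gradient vector-valued, insert the baseline after conditioning on $r$, control $\bigl\|\sum_a (p(a\mid r)-p(a))S_\theta(a)\bigr\|$ by the triangle inequality, apply Pinsker separately to each conditional $p(\cdot\mid r)$ against the marginal, and recombine with Jensen on the concave square root. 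The two routes are in fact equivalent at the intermediate level, since
\begin{equation*}
    \sum_r p(r)\,\|p(\cdot\mid r)-p(\cdot)\|_1=\sum_{a,r}|p(a,r)-p(a)p(r)|=2\,TV(p,q),
\end{equation*}
and the chain rule $\sum_r p(r) D_{KL}(p(\cdot\mid r)\|p(\cdot))=I(A;R)$ makes your conditional-Pinsker-plus-Jensen step land exactly where the paper's single joint Pinsker application does; your route even passes through the slightly tighter intermediate bound $\sum_r p(r)\sqrt{2D_{KL}(p(\cdot\mid r)\|p(\cdot))}$ before Jensen relaxes it. What your version buys is the avoidance of the unit-vector dualization and the explicit covariance lemma; what the paper's version buys is that the mutual information appears immediately as a single KL divergence without needing the chain-rule identity. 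Your remark that the vanishing discarded term factors as $\bar r\cdot\mathbb{E}_a[S_\theta(a)]=0$, and your observation about $R$ being a deterministic function of $a$ in the one-shot game, are both accurate and handled more carefully than in the paper, which writes the reward marginal as a density.
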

Note that some notations and MI-TET have changed due to the assumptions of the theorem. If we reach further to the discretized case, we get
\begin{theorem}[Trainability Theorem (2) - One-shot game, discretized]\label{thm:2}
    With the same notation given in the \Cref{thm:1}, 
    \begin{equation}
        ||g_s(\theta)|| \leq \sqrt{2}G_{\max}\sqrt{\operatorname{Var}(g(\Tilde{R}))}\sqrt{I(A; \Tilde{R})} + G_{\max}\Delta/2.
    \end{equation}
\end{theorem}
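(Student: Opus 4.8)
The plan is to recycle the argument behind \Cref{thm:1} almost verbatim, inserting a single decomposition step that separates the genuine signal from the rounding error of the binning. As in the non-discretized case, I would begin from the score-function form of the one-shot gradient,
\[
g_s(\theta) = \mathbb{E}_{a\sim\pi_\theta}\!\left[S_\theta(a)\,R(a)\right] = \sum_{a,r} p(a,r)\,S_\theta(a)\,r,
\]
and then write the realized return as $r = g(\Tilde{r}) + \bigl(r - g(\Tilde{r})\bigr)$, where $\Tilde{r}$ is the bin index of $r$ and $g(\Tilde{r}) = m_{\Tilde{r}}$ is its midpoint representative. This splits $g_s(\theta)$ into a quantized contribution, which depends on the return only through $\Tilde{R}$, and a residual contribution carrying the discretization error. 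The design goal is that the quantized piece reproduces the \Cref{thm:1} bound but with the computable $I(A;\Tilde{R})$, while the residual contracts into the additive $G_{\max}\Delta$ term.

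For the residual I would use only the triangle inequality and the pointwise rounding bound. Because $r$ and its representative $m_{\Tilde{r}}$ lie in the same half-open bin $B_{\Tilde{r}} = [b_{\Tilde{r}}, b_{\Tilde{r}}+\delta)$ of width $\delta$, we have $|r - g(\Tilde{r})| \le \delta/2 =: \Delta$. Together with $||S_\theta(a)|| \le G_{\max}$ and $\sum_{a,r} p(a,r) = 1$, this gives
\[
\left|\left|\sum_{a,r} p(a,r)\,S_\theta(a)\bigl(r - g(\Tilde{r})\bigr)\right|\right| \le G_{\max}\,\Delta,
\]
which is precisely the correction term in the statement.

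For the quantized contribution I would marginalize each bin, writing $\sum_{a,r} p(a,r)\,S_\theta(a)\,m_{\Tilde{r}} = \sum_{a,\Tilde{r}} p(a,\Tilde{r})\,S_\theta(a)\,m_{\Tilde{r}}$ with $p(a,\Tilde{r}) = \sum_{r \in B_{\Tilde{r}}} p(a,r)$, and then apply the baseline identity $\mathbb{E}_a[S_\theta(a)] = \sum_a \pi_\theta(a)\nabla_\theta\log\pi_\theta(a) = \nabla_\theta\!\sum_a \pi_\theta(a) = 0$. Since $\sum_{a,\Tilde{r}} p(a)p(\Tilde{r})\,S_\theta(a)\,m_{\Tilde{r}} = 0$ by this identity, I may replace $p(a,\Tilde{r})$ by the centered weight $p(a,\Tilde{r}) - p(a)p(\Tilde{r})$ for free. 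Bounding $||S_\theta(a)|| \le G_{\max}$ and $|m_{\Tilde{r}}| \le R_{\max}$ (every midpoint lies inside $[-R_{\max}, R_{\max}]$) reduces the quantized term to $G_{\max} R_{\max}$ times the $\ell_1$ distance $\sum_{a,\Tilde{r}} |p(a,\Tilde{r}) - p(a)p(\Tilde{r})|$, and Pinsker's inequality applied to $I(A;\Tilde{R}) = D_{KL}\bigl(p(a,\Tilde{r}) \,||\, p(a)p(\Tilde{r})\bigr)$ bounds that distance by $\sqrt{2\,I(A;\Tilde{R})}$. Combining the two contributions by the triangle inequality yields $||g_s(\theta)|| \le \sqrt{2}\,G_{\max}R_{\max}\sqrt{I(A;\Tilde{R})} + G_{\max}\Delta$.

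The main obstacle is conceptual rather than computational: the split must be arranged so that the full return $r$ --- not its representative --- is routed through the residual, so the rounding error is controlled \emph{pointwise} by $\Delta$, while the surviving term depends on the return only through the discrete $\Tilde{R}$, so that Pinsker produces the well-defined $I(A;\Tilde{R})$ instead of the continuous $I(A;R)$ that is exactly what the discretization was meant to avoid. Secondary care is needed to pin down $\Delta$ as the worst-case half-bin width $\delta/2$ and to verify $|m_{\Tilde{r}}| \le R_{\max}$ for the chosen partition of $[Y_{\min}, Y_{\max}]$; a careless choice here would only loosen the additive term but not invalidate the structure of the bound.
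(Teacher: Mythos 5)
Your proposal is correct and follows essentially the same route as the paper's proof: decompose $R$ into its bin representative $g(\Tilde{R})$ plus a rounding residual, run the Theorem~1 centering/Pinsker argument on the quantized part to obtain $\sqrt{2}\,G_{\max}R_{\max}\sqrt{I(A;\Tilde{R})}$, and absorb the residual into the additive $G_{\max}\Delta$ term. The only differences are cosmetic: you bound the residual as a raw expectation using the pointwise rounding bound (yielding $G_{\max}\delta/2$, with your convention $\Delta:=\delta/2$), whereas the paper bounds it as a covariance via the auxiliary lemma $\|\mathrm{Cov}(X,Z)\|\le 2\|X\|_\infty\,\mathbb{E}[|Z|]$ (yielding $G_{\max}\delta$ with $\Delta$ taken to be the full bin width), and you control the vector norm directly by the triangle inequality rather than through the paper's dual-norm supremum over unit vectors.
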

Again note that by discretizing $Y$, we encounter $G_{\max}\Delta/2$ as a cost. Yet, we take the $\Tilde{Y}$ for the real MI-TET, due to the the reasons given earlier. 

Next is our main trainability theorem. Using the instantaneous MI-TET defined above, together with the time-augmented state \(\bar S=(J,S_J)\), we obtain the following finite-horizon trainability bound.
\begin{figure*}[t]
    \begin{theorem}[Trainability Theorem (3) - Multiple games, discretized]
        Assume $|Y| \leq Y_{\max}$, and let
\[
\alpha_T(\gamma) := \left(\sum_{t=0}^{T-1}\gamma^t\right)^{-1},
\qquad
\eta'(\theta) := \alpha_T(\gamma)\,\eta(\theta).
\]
        Then
\[
\|\nabla_\theta \eta'(\theta)\|
\le
\sqrt{2}\,G_{\max}
\sqrt{\mathbb{E}_{\bar S}\!\left[\operatorname{Var}(g(\tilde Y)\mid \bar S)\right]}
\sqrt{I(A;\tilde Y\mid \bar S)}
+
G_{\max}\Delta/2,
\]
where \(Y_t \in \{G_t,\;Q_t^{\pi_\theta}(S_t,A_t)\}\). \label{thm:trainability}
    \end{theorem}
\end{figure*}
According to that, MI-TET has shown to be able to work as an \textbf{upper bound} for our modified (scaled) objective function's gradient; hence an indirect trainability proxy. Note that when we measure our policy function (or circuit)'s trainability or keep track of it, we use the gradient of the modified objective function $\nabla_\theta \eta'(\theta)$. On the other hand, for the real update process, we use the information of the original objective function's gradient $\nabla_\theta \eta(\theta)$.

At first glance, the gradient inequality may appear loose because it includes the maximum value of the score function's norm, $G_{\max}$, and discretization error term $\Delta$. 
However, note that our goal is not to achieve a numerically tight bound, but to provide an interpretable decomposition of the temporal variation in trainability during learning.
For convenience, define
\[
\sigma_{g\mid \bar S}
:=
\sqrt{\mathbb{E}_{\bar S}\!\left[\operatorname{Var}(g(\tilde Y)\mid \bar S)\right]}.
\]
Then the bound can be written as
\[
\|\nabla_\theta \eta'(\theta)\|
\le
a\,\sigma_{g\mid \bar S}\,\sqrt{\mathrm{MI\text{-}TET}_{\mathrm{inst}}(\theta)} + b,
\]
where
\[
a := \sqrt{2}\,G_{\max}, \qquad b := G_{\max}\Delta/2.
\]
In practical policy-gradient pipelines, $G_{\max}$ and $\Delta$ are fixed by design for a chosen architecture and binning scheme. 
Under this fixed-design setting, the principal time-varying factor is the product
\begin{equation}
    \sigma_{g | \bar S} \sqrt{\operatorname{MI-TET}_{\text{inst}}(\theta)}
\end{equation}
which serves as a tractable online proxy for trainability. 

For the complete proof of these trainability theorems, please refer to the \Cref{Appendix A}.

We now turn to temporal expressivity, while accordingly introducing a windowed counterpart of MI-TET that is estimated from pooled recent samples. We first propose the intuition behind the new expressivity definition: the policy function's temporal expressivity as the \textbf{deviation} among multiple sampled action distributions.

\begin{definition}[Windowed Temporal Expressivity]
    Let $(S, A, Z)$ denote the pooled joint distribution formed from the most recent $N$ policy snapshots, where $Z \in \{1, 2, \cdots, N\}$ indexes the snapshot. For a fixed state $s$, define
    \begin{equation}
        w_i(s) := P(Z = i | S = s), \quad \pi_i^s(a) := P(A = a | Z = i, S = s),
    \end{equation}
    and the state-conditional mean action distribution
    \begin{equation}
        \bar{\pi}^s(a) := \sum_{i=1}^N w_i(s) \pi_i^s(a) = P(A = a | S = s).
    \end{equation}
    We define the state-conditional windowed temporal expressivity by
    \begin{equation}
        \operatorname{Expr}(s) := \sum_{i=1}^N w_i(s) D_{KL}\left(\pi_i^s(\cdot) || \bar{\pi}^s(\cdot)\right).
    \end{equation}
    The overall windowed temporal expressivity is
    \begin{equation}
        \operatorname{Expr} := \mathbb{E}_S\left[\operatorname{Expr}(S)\right].
    \end{equation}
    Equivalently,
    \begin{equation}
        \operatorname{Expr} = I(A; Z | S).
    \end{equation}
\end{definition}
The final equality holds trivially due to definition of mutual information.

This temporal expressivity quantifies how much a policy's action distribution changes over time during training. Concretely, for a fixed state $s$, we consider the action distributions $\{\pi_i^s\}_{i=1}^N$ collected from the most recent $N$ sampling iterations (or policy snapshots). If these distributions differ substantially from one another, it indicates that the policy is still undergoing meaningful updates--e.g., it remains exploratory or continues to adapt strongly. In contrast, if the distributions become nearly identical, the policy has stabilized or is converging towards concentrated (nearly deterministic) behavior. In this sense, our \emph{Expr} captures not only whether the policy is stochastic or deterministic at a given moment, but more directly the temporal variability of the policy's behavior throughout learning.

We adopt the weighted Jensen-Shannon divergence (JSD) because it provides a symmetric and stable notion of deviation among multiple distributions. Importantly, the JSD-based definition admits a natural information-theoretic interpretation: it can be equivalently expressed as the conditional mutual information $I(A; Z | S)$. This means that, given a state, temporal expressivity measures how informative the action $A$ is about the temporal index $Z$--or equivalently, how much the policy's behavioral output changes across time.

This temporal expressivity metric is also meaningfully different from common alternatives. For instance, the policy entropy $H(A | S)$ describes how stochastic the policy is at a single time point, but it does not directly capture how the policy distribution evolves across training. A policy can maintain high entropy while remaining nearly unchanged over time, and conversely, entropy may decrease even though the policy undergoes abrupt changes during certain phases of training. Thus, entropy is informative about instantaneous randomness, but it is not designed to measure temporal evolution.

To connect this quantity with the windowed MI-TET, define
\[
\mathrm{MI\text{-}TET}_{\mathrm{win}} := I(A;\tilde Y \mid S),
\]
where the joint distribution is induced by the pooled recent-window samples. Then our expressivity theorem is given as follows. 

\begin{theorem}[Expressivity Theorem]
Let
\[
\mathrm{Expr}_{\mathrm{win}} := I(A;Z\mid S).
\]
Then
\[
\mathrm{Expr}_{\mathrm{win}}
\le
\mathrm{MI\text{-}TET}_{\mathrm{win}}
+
I(A;Z\mid \tilde Y,S).
\]
If $I(A; Z | \tilde{Y}, S) \approx 0$, then
\begin{equation}
    \mathrm{Expr}_{\mathrm{win}} \leq \mathrm{MI\text{-}TET}_{\mathrm{win}}.
\end{equation}
\end{theorem}
\begin{proof}
    \begin{align}
        I(A; \Tilde{Y}, Z | S) & = I(A; \Tilde{Y} | S) + I(A; Z | \Tilde{Y}, S) \\
        & = I(A; Z | S) + I(A; \Tilde{Y} | Z, S)
    \end{align}
    Then
    \begin{equation}
        I(A; Z | S) = I(A; \Tilde{Y}, Z | S) - I(A; \Tilde{Y} | Z, S).
    \end{equation}
    Since $I(A; \Tilde{Y} | Z, S) \geq 0$, we get
    \begin{align}
        & I(A; \Tilde{Y}, Z | S) \geq I(A; Z | S) \\
        & \equiv I(A; \Tilde{Y} | S) + I(A; Z | \Tilde{Y}, S) \geq I(A; Z | S).
    \end{align}
\end{proof}

This expressivity theorem decomposes temporal expressivity into two components: the part mediated by the discretized reward signal $\Tilde{Y}$, and a residual component that remains time-dependent even after conditioning on $(\Tilde{Y}, S)$. The assumption $I(A; Z | \Tilde{Y}, S) \simeq 0$ means that, given the state and (discretized) reward, the policy's action distribution does not depend on the temporal index $Z$; i.e., the policy is \emph{locally} stationary under comparable state-reward conditions. Then the theorem says that under that regime, MI-TET would serve as a valid upper bound for our temporal expressivity metric with constant factor one. 

\subsection{Initialization-time Probabilistic Prescreening}
While the above result links MI-TET to temporal expressivity, it is also useful to note that MI-TET can be connected back to initialization-time trainability in a probabilistic manner. By combining the trainability bound with a concentration-type assumption over the initialization distribution, one can derive a one-sided upper bound for the probability that the initialized policy retains a gradient norm above a prescribed threshold. Hence, beyond tracking temporal variability during learning, MI-TET can also induce an assumption-based prescreening criterion that can help rule out architectures whose random initializations are likely to be gradient-fragile.

First, let us introduce some useful notations.
\begin{itemize}
    \item \(\theta \sim D_{\mathrm{init}}\) (initial parameter distribution),
    \item \(G(\theta) := \|\nabla_\theta \eta'(\theta)\|\),
    \item \(M(\theta) := \mathrm{MI\text{-}TET}_{\mathrm{inst}}(\theta) = I(A;\tilde Y\mid \bar S)\),
    \item \(V(\theta) := \mathbb{E}_{\bar S}\!\left[\operatorname{Var}(g(\tilde Y)\mid \bar S)\right]\),
    \item \(a := \sqrt{2}\,G_{\max}\), \(b := G_{\max}\Delta/2\),
    \item \(E_\epsilon := \{G(\theta)\ge \epsilon\}\).
\end{itemize}
Then our \Cref{thm:trainability} simplifies into
\begin{equation}
    G(\theta) \leq a\sqrt{V(\theta)}\sqrt{M(\theta)} + b.
\end{equation}

To turn this pointwise inequality into a probabilistic statement over the initialization distribution, we impose the following two regularity assumptions.

\begin{assumption}[Initialization-time variance envelope]
There exists a deterministic constant $\bar V > 0$ such that
\[
V(\theta) \le \bar V
\qquad
\text{for } D_{\mathrm{init}}\text{-almost every } \theta.
\]
\end{assumption}

\begin{assumption}[Initialization concentration of MI-TET]
Let
\[
\mu_M := \mathbb{E}_{\theta \sim D_{\mathrm{init}}}[M(\theta)].
\]
Assume that there exist constants $C_0 \ge 1$ and $\kappa > 0$ such that, for every $t \ge 0$,
\[
\Pr_{\theta \sim D_{\mathrm{init}}}\!\big(M(\theta)-\mu_M \ge t\big)
\le
C_0 e^{-\kappa t^2}.
\]
\end{assumption}

Under these assumptions, we obtain the following initialization-time prescreening result.

\begin{proposition}[Initialization-time probabilistic prescreening under concentration assumptions]
Assume that the above trainability bound holds, together with Assumptions 1 and 2. Then, for every $\epsilon > b$,
\[
\Pr_{\theta \sim D_{\mathrm{init}}}\!\big(E_\epsilon\big)
=
\Pr_{\theta \sim D_{\mathrm{init}}}\!\big(G(\theta)\ge \epsilon\big)
\le
C_0 \exp\!\left(
-\kappa\big(\tau_\epsilon-\mu_M\big)_+^2
\right),
\]
where
\[
\tau_\epsilon
:=
\left(
\frac{\epsilon-b}{a\sqrt{\bar V}}
\right)^2,
\qquad
(x)_+ := \max\{x,0\}.
\]
\end{proposition}

\begin{proof}
By Assumption 1, we have
\[
G(\theta)
\le
a\sqrt{V(\theta)}\sqrt{M(\theta)} + b
\le
a\sqrt{\bar V}\sqrt{M(\theta)} + b.
\]
Hence, for any $\epsilon>b$,
\[
G(\theta)\ge \epsilon
\quad\Longrightarrow\quad
a\sqrt{\bar V}\sqrt{M(\theta)} + b \ge \epsilon,
\]
which implies
\[
M(\theta)\ge
\left(
\frac{\epsilon-b}{a\sqrt{\bar V}}
\right)^2
=
\tau_\epsilon.
\]
Therefore,
\[
E_\epsilon
\subseteq
\{M(\theta)\ge \tau_\epsilon\},
\]
and thus
\[
\Pr(E_\epsilon)
\le
\Pr\big(M(\theta)\ge \tau_\epsilon\big).
\]

If $\tau_\epsilon \le \mu_M$, then the claim is trivial since
\[
\Pr(E_\epsilon)\le 1 \le C_0.
\]
If $\tau_\epsilon > \mu_M$, let $t=\tau_\epsilon-\mu_M \ge 0$. Then Assumption 2 gives
\[
\Pr\big(M(\theta)\ge \tau_\epsilon\big)
=
\Pr\big(M(\theta)-\mu_M \ge \tau_\epsilon-\mu_M\big)
\le
C_0 e^{-\kappa(\tau_\epsilon-\mu_M)^2}.
\]
Combining both cases yields
\[
\Pr(E_\epsilon)
\le
C_0 \exp\!\left(
-\kappa\big(\tau_\epsilon-\mu_M\big)_+^2
\right).
\]
\end{proof}
The above proposition should be interpreted as a one-sided elimination rule rather than a calibrated predictor of optimization success. In particular, the bound becomes informative only when $\tau_\epsilon > \mu_M$. In that regime, the quantity $\tau_\epsilon - \mu_M$ directly controls the sharpness of the exponential probability decay, so that a larger gap implies a stronger suppression of the probability that a random initialization retains a non-negligible gradient norm. Conversely, when $\tau_\epsilon \le \mu_M$, the resulting upper bound becomes vacuous, indicating that the present criterion does not provide meaningful elimination power in that regime.

Motivated by this observation, we introduce the following bound-inspired prescreening score.
\begin{definition}[Initialization-time prescreening score]
For a fixed task, initialization protocol, and gradient threshold $\epsilon > b$, define the initialization-time prescreening score by
\[
\Gamma_\epsilon
:=
\big(\tau_\epsilon - \mu_M\big)_+,
\]
where
\[
\tau_\epsilon
=
\left(
\frac{\epsilon-b}{a\sqrt{\bar V}}
\right)^2,
\qquad
\mu_M
=
\mathbb{E}_{\theta\sim D_{\mathrm{init}}}[M(\theta)].
\]
\end{definition}
In this sense, $\Gamma_\epsilon$ could be used as an assumption-based criterion for prescreening architectures that are likely to be initialization-fragile.

In practice, the exact quantity $\Gamma_\epsilon$ may not be directly accessible, since $\mu_M$ may need to be approximated empirically and the theoretical constants such as $G_{\max}$ might be difficult to estimate. Accordingly, under a common task and fixed initialization procedure, one may instead use the corresponding plug-in estimate
\[
\widehat{\Gamma}_\epsilon
:=
\big(\widehat{\tau}_\epsilon - \widehat{\mu}_M\big)_+,
\]
where $\widehat{\mu}_M$ is an empirical estimate of the initialization-time mean of MI-TET, and $\widehat{\tau}_\epsilon$ is obtained by replacing the theoretical quantities in $\tau_\epsilon$ with their protocol-calibrated or sample-based counterparts (for more details, please refer to \Cref{Appendix B}). Importantly, $-\widehat{\Gamma}_\epsilon$ should not be interpreted as a calibrated estimate of the survival probability itself; rather, it serves as a bound-inspired ranking heuristic for comparing architectures under a common evaluation setting.

\section{Numerical Simulations}
\subsection{Experimental Protocol}
This numerical simulation study is designed to examine MI-TET as a temporal diagnostic within a quantum policy gradient (REINFORCE) pipeline and to assess whether the observed training dynamics are qualitatively consistent with the inequality-based relationships developed in the preceding sections. In particular, we examine how MI-TET evolves during training and how it co-varies with gradient-related and temporal expressivity-related terms appearing in the theoretical bounds.

All experiments use the CartPole-v1 environment in a finite horizon reinforcement learning setting \cite{OpenAIGym}, consistent with the assumptions adopted in the policy gradient formulation. We employ a  REINFORCE pipeline with the softmax-PQC policy architecture of Jerbi et al. \cite{ReUploadingPQC}. For the primary training runs, we use a 4-qubit, depth-1 circuit with all-to-all trainable entanglers. Training is performed for 180 episodes over six random seeds, using mini-batches of eight episodes, a horizon cap of 180 steps per episode, and $\gamma = 1$. Greedy policy evaluation is performed every four parameter updates, as well as at the first update over six evaluation episodes. The inverse-temperature parameter $\beta$ is linearly annealed from 1.0 to 3.0 during training, and no value baseline is used.

To monitor the key, theory-linked quantities during training, we estimate MI-TET, the trainability proxy (scaled gradient and right-hand side term of the trainability inequality), the temporal expressivity metric, and the residual term online at each parameter update using rolling windows of recently collected trajectory samples. In the main runs, these diagnostics are recomputed from a sliding window of 1200 samples, and estimation starts once at least 140 samples are available. States are discretized using four bins per state dimension. Unless otherwise stated, we record both $G_t$- and $\widehat{Q}$ (empirical $Q$ function)-based reward signals. For both of them, we use $B = 16$ as the default discretization bin count.

As supplementary protocol checks, we also conduct a bin-sensitivity experiment on the same primary architecture with $B \in \{4, 8, 12, 16\}$, using four seeds and 180 training episodes under the same main-run schedule (batch size 8, $\beta$ : $1.0 \rightarrow 3.0$, no baseline). We also perform an architecture-comparison study with shallow4, mid4, deep4 PQCs with configurations (4, 1, ring, trainable), (4, 1, all-to-all, trainable), and (4, 2, all-to-all, trainable), where the entries denote the number of qubits, circuit depth, entanglement topology, and entangler type, respectively. For the initialization-time prescreening experiment, we sample 12 random initializations per architecture-seed pair and collect six episodes per initialization, using $\epsilon = 0.2$, strict state encoding, and both $G_t$- and $\widehat{Q}$-based reward signals; we use $B = 64$, a minimum of 80 samples, and an empirical quantile $|Y|$ calibration ($q = 0.85$). The follow-up architecture training is then run over five seeds, with 80 episodes for shallow4/deep4 and 160 episodes for mid4, using mini-batches of four episodes and $\beta$ annealing from 1.0 to 8.0.

For more details concerning reproducibility, please refer to the following GitHub repositories (\cref{githubLink}).

\subsection{Learning Dynamics in the Reference Setting}
We begin by summarizing the learning dynamics in the reference (primary-architecture) setting. We track both episode returns during training and evaluation, as well as MI-TET computed using $G_t$ and $\widehat{Q}$ as reward signals.

\begin{figure*}
    \centering
    \begin{subfigure}[t]{0.49\textwidth}
        \centering
        \includegraphics[width=\linewidth]{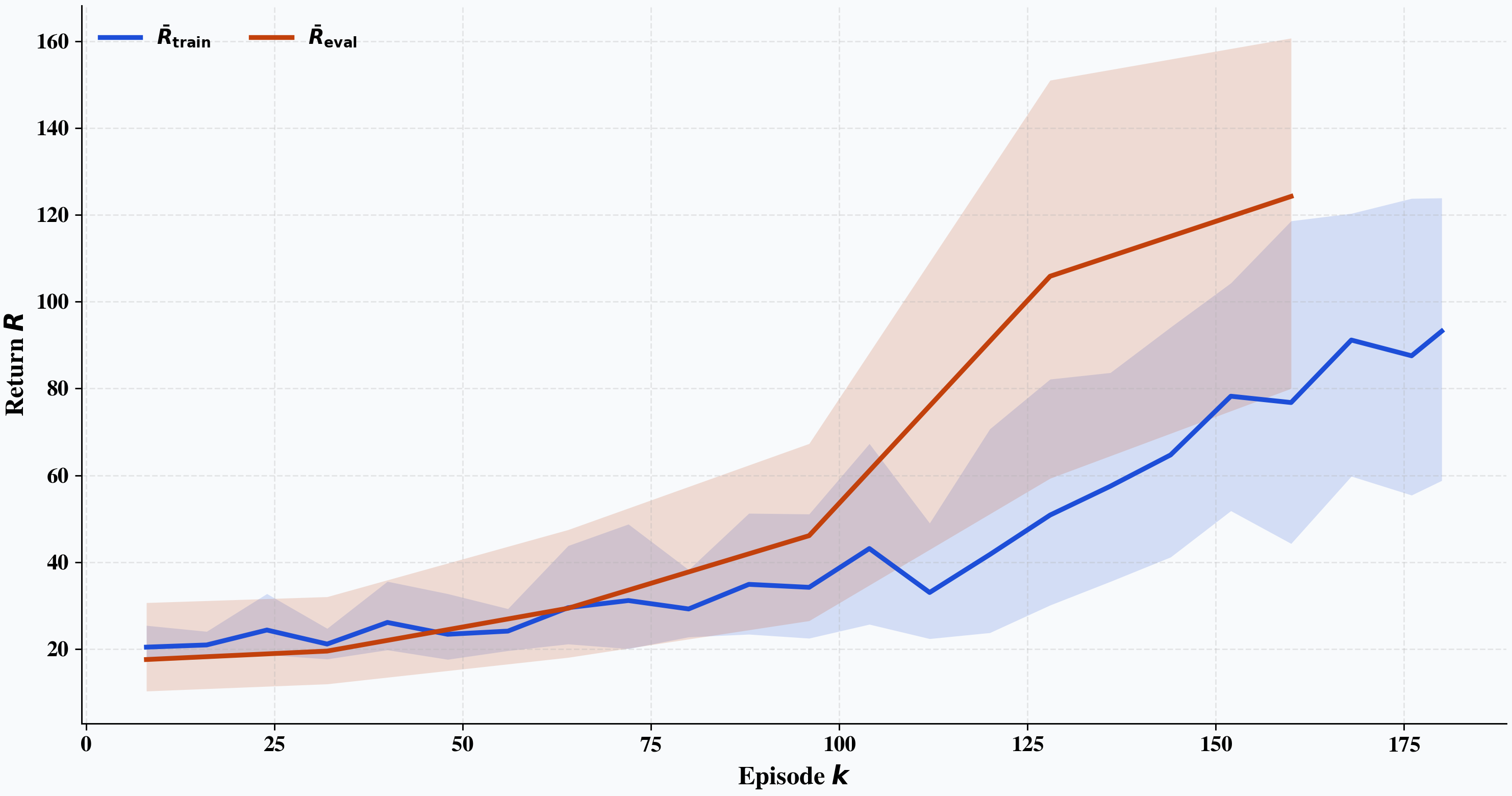}
        \caption{Training and evaluation episode return dynamics.}
        \label{fig:ref_return}
    \end{subfigure}
    \hfill
    \begin{subfigure}[t]{0.49\textwidth}
        \centering
        \includegraphics[width=\linewidth]{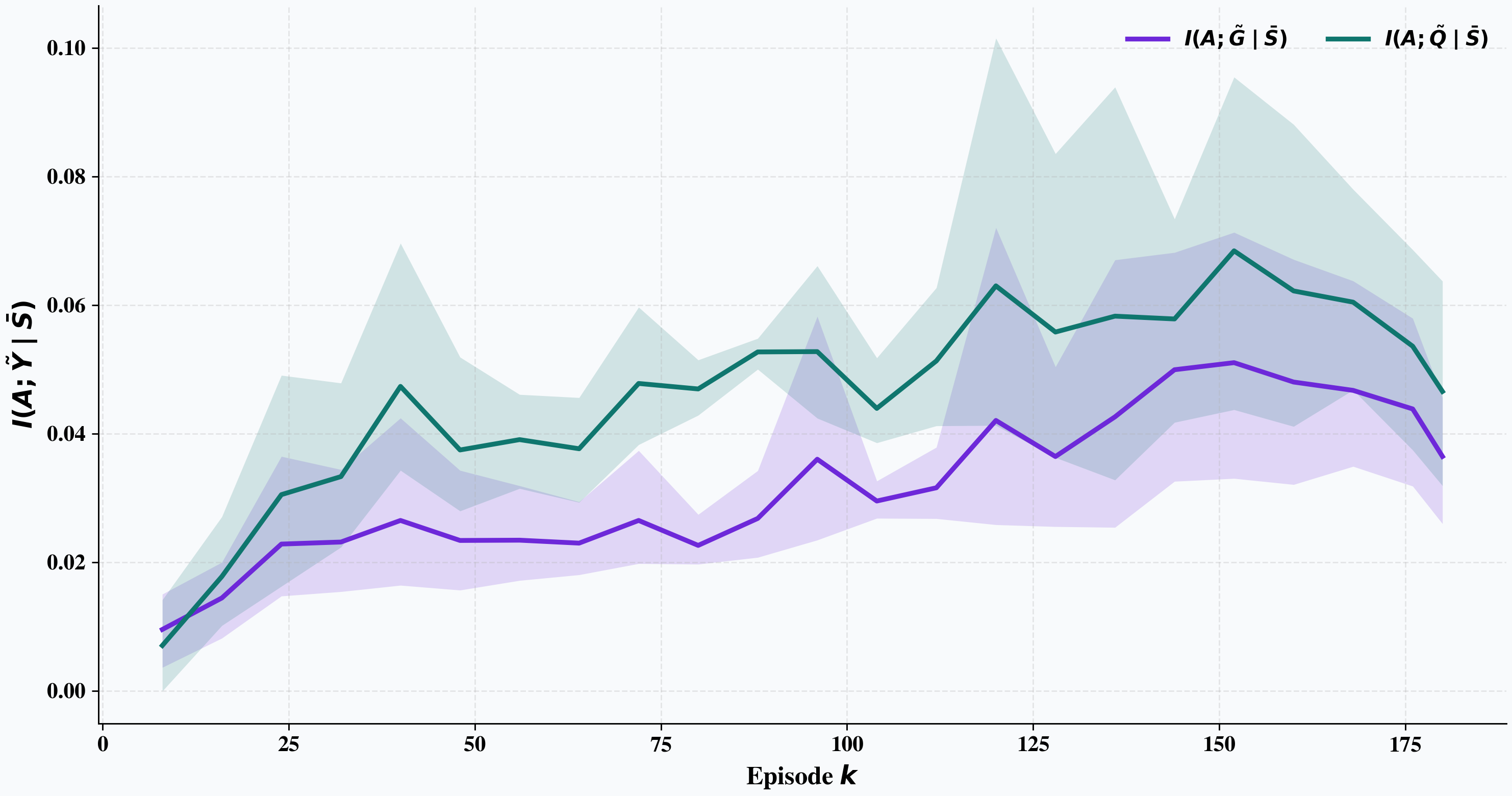}
        \caption{MI-TET trajectories with $G_t$- and $\widehat{Q}$-based signals.}
        \label{fig:ref_mitet}
    \end{subfigure}
    \caption{\textbf{Reference learning dynamics and online MI-TET tracking in the primary-PQC CartPole-v1 setting}. Shaded regions indicate variability across seeds.}
    \label{fig:ref_dynamics}
\end{figure*}

In the reference setting, the return curves indicate that the quantum policy gradient pipeline learns consistently across random seeds and reaches a sustained high-return regime. As shown in the first plot, both training and evaluation returns improve substantially over the course of learning. In the late stage, however, the rate of improvement appears to diminish, suggesting that performance may be approaching a plateau. Moreover, the evaluation curve is noticeably smoother, likely because evaluation is recorded every four updates and averaged over six episodes. By contrast, the training curve is obtained from on-policy interaction during learning and therefore exhibits greater variability.

In the second plot, the MI-TET trajectories provide a complementary view of how the action-reward dependence evolves during training. It shows that MI-TET increases steadily during the early stage of training, but its trend reverses once learning enter a partial plateau, decreasing between approximately episodes 150 and 175. This pattern is itself consistent with the exploration-exploitation dynamics anticipated in the theory section. During the early phase of learning, when the agent explores a diverse range of behaviors and searches for dependencies between action and returns, the mutual information--which may be interpreted as a quantity that captures this dependency structure--tends to increase. As training becomes more stable, however, the policy entropy associated with a given state decreases, which in turn leads to a reduction in MI-TET. From this perspective, MI-TET appears to track the expected learning dynamics reasonably well. In addition, the absence of a substantial `trend' discrepancy between MI-TET computed using $\widehat{Q}$ and that computed using $G_t$ is also consistent with the preliminary claims suggesting that those two can be treated interchangeably.

\subsection{Empirical Validation of Trainability Theorem}
Next, to empirically examine the trainability theorem, we plot the left-hand side scaled gradient norm together with the individual terms on the right-hand side in a single figure. 

\begin{figure*}
    \centering
    \includegraphics[width=\textwidth]{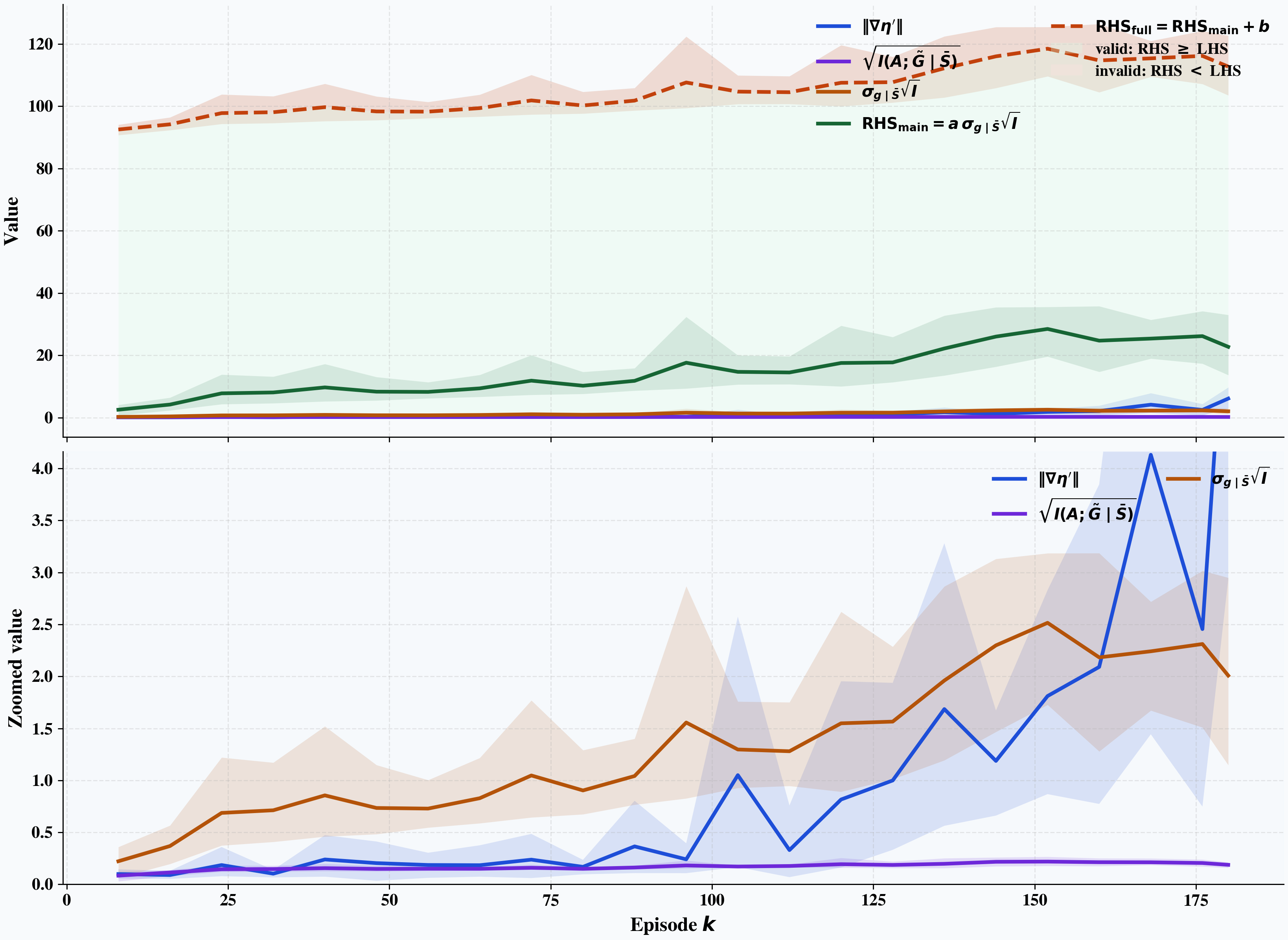}
    \caption{\textbf{Empirical verification of the trainability theorem in the reference setting}. 
    The top panel shows the scaled gradient norm and the RHS components on a common axis, while the bottom panel provides a zoomed-in view of the low-scale region.}
    \label{fig:trainability_theorem_fullpage}
\end{figure*}

The top panel of Fig.~2 displays $||\nabla_\theta \eta'(\theta)||$ (the scaled gradient norm), $\sqrt{I(A; \tilde{G} | \bar S)}$, $\sigma_{g | \bar S} \sqrt{I(A; \tilde{G} | \bar S)}$, as well as the full right-hand side term of the inequality on a common axis, whereas the bottom panel provides a zoomed-in view of the low-scale region around the gradient and multiplicative term. Here,
\begin{align}
    & \text{RHS}_{\text{main}} = a \sigma_{g | \bar S} \sqrt{I(A; \tilde{G} | \bar S)} \\
    & \text{RHS}_{\text{full}} = \text{RHS}_{\text{main}} + b
\end{align}
with constants
\begin{equation}
    a = \sqrt{2}G_{\max}, \quad b = \frac{G_{\max} \Delta}{2},
\end{equation}
where $\Delta$ corresponds to the discretization bias term.

The top panel shows that the upper bound `vacuumness' with the theoretically predicted scale mismatch is clearly present in practice as well. In particular, $||\nabla_\theta \eta'(\theta)||$ remains at roughly the $10^0$ scale, whereas $\text{RHS}_{\text{full}}$ reaches the $10^2$ scale, so the visual gap between the two is on the order of $10^2$. This is directly explained by the definition of the scaled objective function, 
\begin{equation}
    \eta'(\theta) = \frac{1 - \gamma}{1 - \gamma^T} \eta(\theta), \quad \nabla_\theta \eta'(\theta) = \left(\frac{1 - \gamma}{1 - \gamma^T}\right) \nabla_\theta \eta(\theta).
\end{equation}
When $\gamma = 1$ with a finite horizon $T$, as in the present experiment, the prefactor satisfies \begin{equation}
    \frac{1 - \gamma}{1 - \gamma^T} \rightarrow \frac{1}{T}
\end{equation}
so that for $T = 180$ it becomes approximately $5.6 \times 10^{-3}$. Thus, the absolute scale of the gradient is intrinsically reduced by approximately $10^2$-fold, which makes the observed scale gap essentially unavoidable. Moreover, if we decompose the RHS into $\text{RHS}_{\text{main}}$ and $b = G_{\max}\Delta / 2$, we see that $\text{RHS}_{\text{main}}$ alone--although still an upper bound--lies substantially closer to $||\nabla_\theta \eta'(\theta)||$ than $\text{RHS}_{\text{full}}$. By contrast, once the second term $b$ is added, the bound becomes much larger, indicating that this term is the primary source of the bound's triviality.

The bottom (zoomed-in) panel makes it clearer that the time-varying core factor
\begin{equation}
    \sigma_{g | \bar S} \sqrt{I(A; \tilde{G} | \bar S)}
\end{equation}
does track the variant evolution of the actual gradient to a meaningful extent. More specifically, the dominant time-varying factor, $\sigma_{g | \bar S} \sqrt{I(A; \tilde{G} | \bar S)}$, was observed to exhibit notable correlation with the scaled gradient norm during the early and middle stages of training, while also providing an effective upper bound on it in practice. Given that both a and b are generally positive, this is theoretically encouraging, as it suggests that the inequality may already hold with the dominant term alone. The strong correlation in the early and middle phases is also supported quantitatively. Following the stage partition used in the corresponding GitHub repository, updates up to the 8th were classified as early, the 9th through 16th updates as middle, and updates from the 17th upward as late. Under this partition, the correlations between the scaled gradient norm and the dominant time-varying factor $\sigma_{g | \bar S} \sqrt{I(A; \tilde{G} | \bar S)}$ were analyzed as follows (for more completed information concerning correlation, please refer to full correlation matrix appearing in \cref{Appendix C}).
\begin{itemize}
    \item In the early stage, Pearson = 0.7497 and Spearman = 0.1879.
    \item In the middle stage, Pearson = 0.6619 and Spearman = 0.7872.
    \item In the late stage, Pearson = 0.0464 and Spearman = 0.3086.
    \item Over the full training history, Pearson = 0.4298 and Spearman = 0.7271.
\end{itemize}
The relatively strong correlations in the early and middle stages are particularly meaningful, since these phases generally correspond to the period in which learning is most active, thus making its current trainability estimation important. From this perspective, the result strengthens the claim for MI-TET as a diagnostic indicator of learnability. Although the correlation weakens in the late stage, this seems natural given the characteristic behavior of policy gradient learning algorithms, in which gradient estimation noise counts substantially throughout the whole training. Indeed, if the late stage is interpreted as a regime of gradient fluctuations around a certain plateau level (about 3.25), the dominant factor term $\sigma_{g | \bar S} \sqrt{I(A; \tilde{G} | \bar S)}$ may still be understood as tracking the overall tendency of the gradient norm, functioning in a manner analogous to somewhat a moving average.

\subsection{Empirical Validation of Expressivity Theorem}
We also examine the time-varying behavior of the temporal expressivity quantity by comparing it with the policy entropy, and we use the corresponding plot to assess whether the expressivity theorem is empirically satisfied--namely, whether MI-TET, when combined with the residual term as predicted by theory, serves as an appropriate upper bound.

\begin{figure*}
    \centering
    \includegraphics[width=\textwidth]{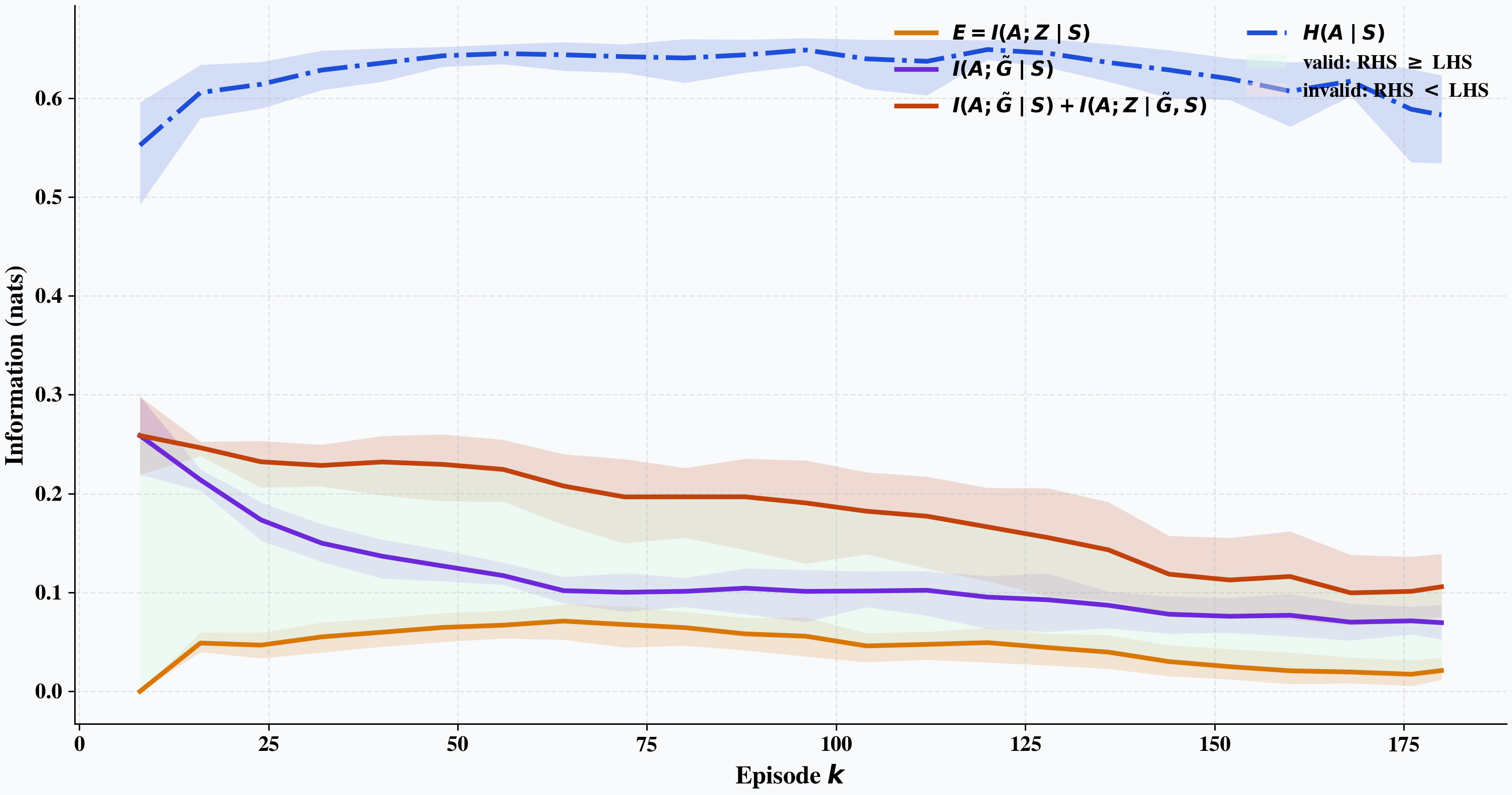}
    \caption{\textbf{Empirical verification of the expressivity theorem in the reference setting}.}
    \label{fig:trainability_theorem_fullpage}
\end{figure*}

In the main text, we emphasized that policy entropy and temporal expressivity need not evolve identically, and that this potential decoupling provides part of the motivation for introducing temporal expressivity as a distinct interpretive tool. In the present experiment, however, we do not observe a pronounced divergence between the two. Rather, the trajectories follow the familiar narrative of successful learning: as training proceeds, the agent's behavior becomes progressively more fixed, and accordingly both the policy entropy and the temporal expressivity--each reflecting, in different ways, the degree of behavioral variation--decrease gradually over time. At least at a qualitative level, the two quantities therefore display a broadly consistent downward trend in this reference setting.

In addition, the plot shows that the expressivity theorem is cleanly satisfied at all measurement points. Although the inclusion of the residual term renders the bound somewhat less tight, it still appears substantially tighter than the bound obtained from the trainability theorems. This is also consistent with what one would expect from the proof of the expressivity theorem itself. The theorem relies on comparatively strong information-theoretic facts, such as the chain rule and nonnegativity of mutual information, and therefore takes a form that is, in a sense, theoretically more complete and structurally better behaved. Of course, a separate analysis is still required for the locally stationary assumption underlying the treatment of the residual term; this will be carried out in the next section. Nevertheless, under the assumption that these conditions hold, the empirical result observed here is notably clean and well aligned with the theoretical prediction.

\subsection{Observations Concerning the Locally-Stationary Assumption}
The following two plots examine when the dynamics satisfy the locally stationary condition during learning. In the left panel, we track the magnitude of the residual term and regard the locally stationary condition as satisfied whenever it falls below a prescribed threshold ($\tau_{LS} = 0.05$). The right panel provides a complementary view by showing the probability of entering this locally stationary zone over the timestamp of learning.

\begin{figure*}
    \centering
    \includegraphics[width=\textwidth]{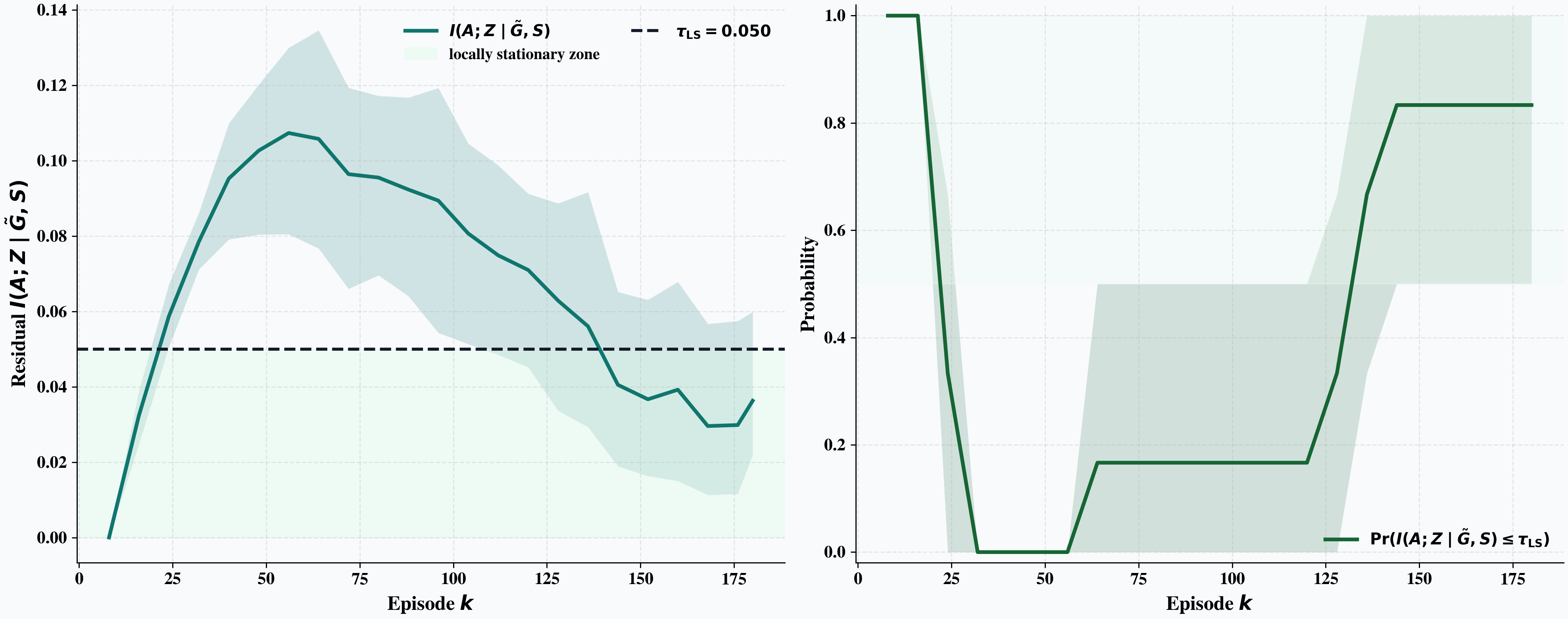}
    \caption{\textbf{Empirical assessment of the residual term and the local-stationarity regime in the reference setting}. \textbf{Left}: time evolution of the residual $I(A; Z | \tilde{G}, S)$ relative to the local-stationarity threshold $\tau_{LS} = 0.05$. \textbf{Right}: empirical probability $\operatorname{Pr}\left(I(A; Z | \tilde{Z}, S) \leq \tau_{LS}\right)$, indicating how often the dynamics lie within the locally stationary zone during learning.}
    \label{fig:trainability_theorem_fullpage}
\end{figure*}

The left panel indicates that the desired condition is not well satisfied during the early and intermediate stages of learning, but only approaches the threshold or is maintained near it in the very earliest and later stage. This is also consistent with the underlying intuition. The locally stationary condition,
\begin{equation}
    I(A; Z | \tilde{G}, S) \approx 0,
\end{equation}
means, in information-theoretic terms, that once the state and discretized return are given, the sampling time $Z$ provides almost no additional information about the action. Equivalently, the action depends primarily on the state-reward pair and only weakly on time, indicating that the policy is no longer changing substantially over time. During the early and middle stages of learning, however, learning is still actively reshaping the relationship among state, reward, and action, so it is more natural that this condition is not satisfied. Accordingly, the empirical results suggest that local stationarity should not be interpreted as a global property holding uniformly throughout learning, but rather as an approximate and gradual property that emerges mostly after learning has sufficiently progressed.

The right panel provides a more refined perspective on the same condition. The probability of satisfying the threshold condition is low--or in some intervals close to zero--during the early phase, and then gradually increases as training proceeds. However, even in the later stage, this probability typically remains with high volatility rather than uniformly staying close to one. This indicates that locally stationary behavior becomes more frequent in the later phase of learning, but substantial fluctuations may still remain even after the main learning dynamics have stabilized. In other words, the locally stationary regime becomes more plausible late in training, but it does not become uniformly dominant.

One particularly noteworthy observation is that, from around episode 150 upward--when the locally stationary condition begins to be satisfied more clearly--the gap between the full term (red curve) and the MI-TET term (blue curve) becomes noticeably narrower. This, too, may be regarded as experimental evidence that more firmly supports the theoretical discussion presented earlier. Conversely, the fact that the gap between the two curves is largest during the middle stage of learning (approximately episodes 25 to 125), where the locally stationary condition appears to be least well satisfied, further corroborates this interpretation.

Taken together with the preceding expressivity theorem results, this analysis leads to the following conclusion:
\begin{enumerate}
    \item The full form of the expressivity theorem remains stably valid throughout training.
    \item The residual term is not negligible in practice and is especially essential during the early and intermediate stages of learning.
    \item The local-stationary assumption does not hold automatically over the entire training horizon. Instead, the residual decreases as learning stabilizes, making the assumption progressively more plausible as an approximate description in the later stage.
\end{enumerate}

\subsection{Initialization-Time Prescreening Across Architectures}
We next examine the initialization-time prescreening score $\Gamma_\epsilon$ and assess how informative it is for downstream training behavior. Fig.~5 is organized accordingly: the three panels compare $\Gamma_\epsilon$ against the initialization survival rate $p_{\text{survive}}^{\text{init}}$ (left), the early failure rate $p_{\text{fail}}^{\text{early}}$ (middle), and the final-stage stability index $S_{\text{final}}$ (right).

\begin{figure*}[t]
    \centering
    \includegraphics[width=\textwidth]{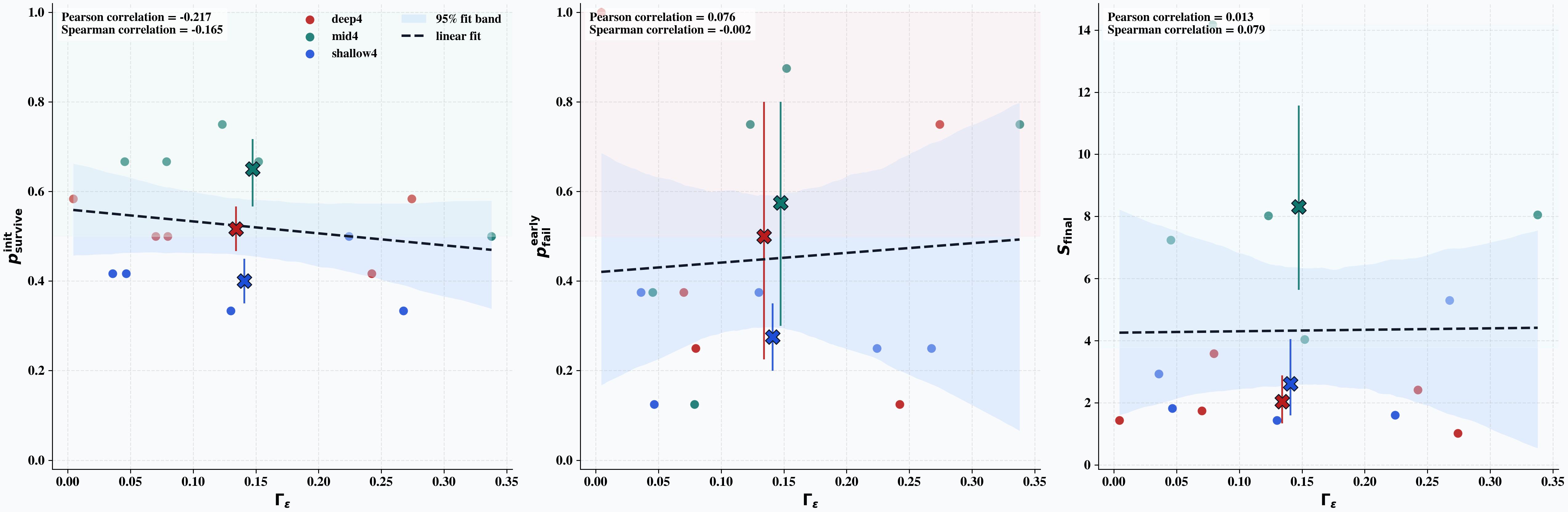}
    \caption{
    \textbf{Initialization-time prescreening protocol and downstream associations of \(\Gamma_{\epsilon}\).}
    Each panel relates the prescreening score \(\Gamma_{\epsilon}\) (computed at initialization) to a downstream summary statistic.
    \textbf{Left:} initialization survival rate
    \(p_{\mathrm{survive}}^{\mathrm{init}}
    := \frac{1}{N_{\mathrm{init}}}\sum_{j=1}^{N_{\mathrm{init}}}\mathbf{1}\!\left[\left\|\nabla_{\theta}\eta'(\theta_j)\right\|\ge t_{\mathrm{init}}\right]\),
    i.e., the fraction of random initializations whose scaled gradient norm exceeds the survival threshold \(t_{\mathrm{init}}\).
    \textbf{Middle:} early failure rate
    \(p_{\mathrm{fail}}^{\mathrm{early}}
    := \frac{1}{U_{\mathrm{early}}}\sum_{u=1}^{U_{\mathrm{early}}}\mathbf{1}\!\left[\left\|\nabla_{\theta}\eta'_u\right\|< t_{\mathrm{fail}}\right]\),
    i.e., the fraction of the first \(U_{\mathrm{early}}\) parameter updates whose scaled gradient norm falls below the failure threshold \(t_{\mathrm{fail}}\).
    \textbf{Right:} final-stage stability index
    \(S_{\mathrm{final}} := \bar R_{\mathrm{tail}}/(1+\sigma_{\mathrm{tail}})\),
    where \(\bar R_{\mathrm{tail}}\) and \(\sigma_{\mathrm{tail}}\) are the mean and standard deviation of the last \(K\) greedy-evaluation returns, respectively.
    Dots denote seed-level runs; \(\times\) markers indicate architecture-level means with bootstrap confidence intervals (vertical).
    The dashed line and shaded band show a linear fit and its 95\% fit band. In our analysis, \(U_{\mathrm{early}}=8\), \(t_{\mathrm{init}}=t_{\mathrm{fail}}=0.20\), and \(K=5\).
    }
    \label{fig:prescreening_protocol}
\end{figure*}

The left panel provides the most direct validation of the intended role of the score. Larger values of $\Gamma_\epsilon$ are associated with lower initialization survival rates, indicating that a larger prescreening score indeed corresponds to a more fragile starting point. This precisely matches the qualitative direction expected from the construction of the score: it is designed to flag architectures that are less likely to maintain a non-negligible gradient signal at the beginning. 

The middle panel indicates that the initialization-level fragility signal carries weak-to-modest information into the earliest stage of training, but with substantial noise. In particular, larger values of $\Gamma_\epsilon$ tend to be associated with higher early failure rates, which is directionally consistent with the idea that fragile starts are more likely to encounter vanishing- or unstable-gradient behavior shortly after optimization begins. At the same time, the dispersion is considerable, implying that early training dynamics are already shaped by additional factors such as stochastic policy-gradient noise, exploration, and architecture-dependent effects. 

The right plot provides a fairly definitive answer to how far this diagnostic indicator can go. Specifically, correlation between the prescreening score and the final-stage stability index $S_{\text{final}}$ is very weak, with the Pearson coefficient in particular remaining close to zero. In addition, the variance is substantial, making it difficult to argue that any context exhibits a trend as clear or robust as that shown in the left-hand plot. This result is also broadly expected. Predicting the stable success of the entire learning process solely from the absence of early-stage gradient information is inherently difficult because many other factors contribute both quantitatively and qualitatively to training outcomes, including the uncertainty intrinsic to stochastic gradient optimizers, the structural properties inherent to PQCs, and the extent to which the training dynamics explore regions that are amenable to approximation (expressivity).

Taken together, $\Gamma_\epsilon$ is most useful as a one-sided `initialization' diagnostic rather than a standalone calibrated predictor of final training quality: it is strongly aligned with initialization survival, only partially informative for early transients, and hardly informative for final stability.

\subsection{Sensitivity to the Bin Count $B$}
Finally, because the bin count is a crucial hyperparameter for MI-TET estimation, we investigate how varying the discretization resolution affects (i) the MI-TET magnitude itself and (ii) the two terms appearing in the trainability theorem. Fig.~6 summarizes this sensitivity analysis. The first panel shows how the mean MI-TET value changes as a function of the bin count. The second panel (bottom-left) reports how each of the RHS components in the trainability bound varies with the bin count, and the third panel (bottom-right) provides a zoomed-in view of $\text{RHS}_{\text{main}}$ from the second panel. 

\begin{figure*}
    \centering
    \includegraphics[width=\textwidth]{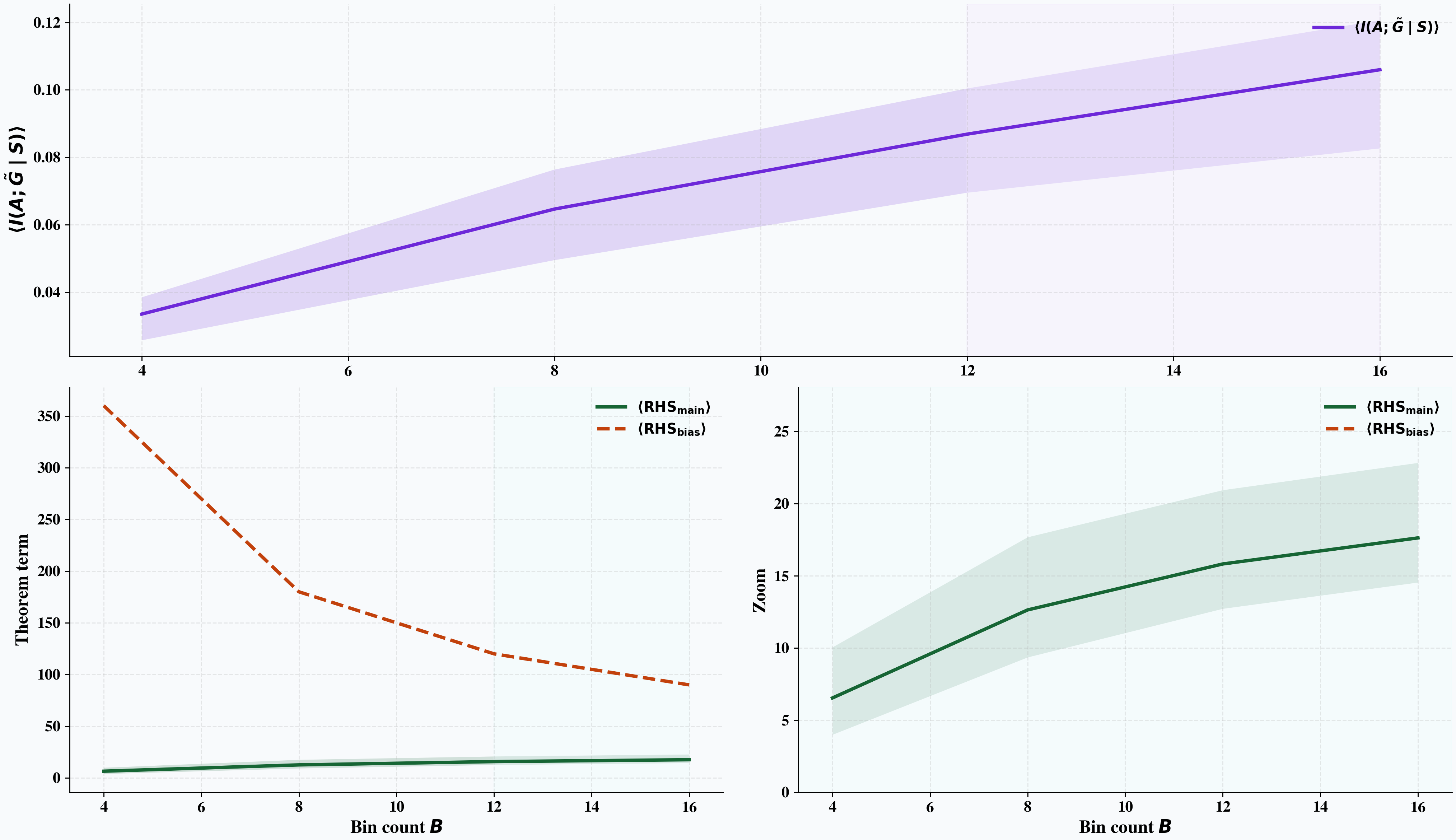}
    \caption{
    \textbf{Bin-count sensitivity of MI-TET and the trainability-bound components}.
    \textbf{Top:} mean MI-TET as a function of the discretization bin count \(B\).
    \textbf{Bottom-left:} bin-count dependence of the RHS components of the trainability theorem (including \(\mathrm{RHS}_{\mathrm{main}}\) and the discretization/bias term \(\mathrm{RHS}_{\mathrm{bias}}\propto \Delta\)).
    \textbf{Bottom-right:} zoomed-in view of \(\mathrm{RHS}_{\mathrm{main}}\).
    Increasing \(B\) improves discretization resolution but can induce sparsity effects at large \(B\). Correspondingly, \(\mathrm{RHS}_{\mathrm{main}}\) tracks the MI-TET trend, whereas \(\mathrm{RHS}_{\mathrm{bias}}\) decreases with \(B\) as \(\Delta\) shrinks.
    }
    \label{fig:bin_sensitivity}
\end{figure*}

Overall, the observed bin-count dependence is consistent with the qualitative behavior anticipated from the theory. Since the bin count directly controls the discretization resolution used to form MI-TET, increasing the bin count effectively increases the representational `sharpness' with which MI-TET can be resolved from data, and this is reflected in the first panel. At the same time, the marginal gain in MI-TET diminishes as the bin count increases: beyond a certain point, overly fine binning induces a data-sparsity regime in which empirical histograms become noisy and under-populated, which can in turn suppress the estimated MI-TET. Thus, the first panel suggests an implicit tradeoff--higher resolution is beneficial up to a point, after which sparsity effects can counteract the expected increase.

The trainability-bound components exhibit complementary bin-count trends that make the tradeoff explicit. Because $\text{RHS}_{\text{main}}$ contains the MI-TET factor, it qualitatively tracks the MI-TET trajectory as the bin count varies. In contrast, the discretization/bias term $\text{RHS}_{\text{bias}} \propto \Delta$ decreases as the bin count increases, since $\Delta$ (the discretization width) shrinks under finer binning. Consequently, the trainability theorem's RHS exhibits a clear bin count tradeoff between its two dominant contributions: increasing the bin count tends to increase the MI-TET-driven main term while simultaneously reducing the discretization/bias term. This highlights that the bin count is not merely a numerical detail but a meaningful hyperparameter that mediates competing effects in both MI-TET estimation and the associated theoretical bound. 

\section{Conclusion and Future Works}
\subsection{Summary and Value of Findings}
Throughout the whole paper, we mainly introduce MI-TET, a mutual information-based diagnostic tailored to policy gradient with PQC policies, with the goal of quantifying both temporal expressivity and trainability in a way that respects RL's inherently time-varying exploration-exploitation dynamics. MI-TET is defined as the conditional mutual information between the action and a discretized reward signal, $\operatorname{MI-TET} := I(A; \Tilde{Y} | \bar S)$ (or $I(A; \Tilde{Y} | S)$) with $Y \in \{G_t, Q_t^{\pi_\theta}\}$, where binning yields a computationally robust estimator that avoids continuous probability density estimation overhead. In parallel, we formulate expressivity for RL as a \emph{temporal volatility} notion, rather than following conventional \emph{capacity-like} perspective. In this sense, our new temporal expressivity notion signifies the degree to which the policy's action distribution changes across recent sampling indices, captured by weighted Jensen-Shannon Divergence and equivalently $\operatorname{Expr} = I(A; Z | S)$. We then establish information-theoretic inequalities linking these quantities: MI-TET upper-bounds (up to design-fixed terms) the scaled gradient norm, providing an interpretable proxy for trainability via the dominant time-varying factor $\sigma_{g | \bar S} \sqrt{\operatorname{MI-TET}}$, and MI-TET also controls temporal expressivity through $\operatorname{Expr} \leq \operatorname{MI-TET} + I(A; Z | \tilde{Y}, S)$.

Empirically, in a CartPole-v1 REINFORCE pipeline with softmax-PQC policies, MI-TET exhibits a characteristic transition over the course of learning: it initially increases as the agent actively explores the action-reward dependence structure, but subsequently declines once training enter a more stable regime, where reduced policy entropy naturally suppresses the mutual information-based signal, MI-TET. While the full trainability bound can be numerically loose due to the discretization/bias term, the core multiplicative factor co-moves with overall temporal changes in the observed gradient norm, supporting MI-TET's practical value as an online diagnostic rather than a tight certificate. The expressivity inequality is cleanly satisfied throughout training, and the residual term is substantial in early/mid phases but tends to diminish later, indicating that local stationarity emerges gradually rather holding globally. Finally, we show that MI-TET can be leveraged at initialization to derive assumption-based, one-sided prescreening score that helps eliminate initialization-fragile PQC architectures, and we also validate the expected bin-count trade-off: finer binning reduces discretization bias while risking sparsity/noise in MI estimates. 

Overall, MI-TET offers an RL-native, online-trackable, information-theoretic handle on both temporal expressivity and gradient survival, with direct implications for monitoring learning dynamics, tradeoff between trainability and temporal expressivity, and for architecture selection in quantum policy gradient pipelines.

\subsection{Limitations and Future Directions}
A first limitation of the present work is that our theoretical guarantees are predominately upper-bound statements. While the derived inequalities justify MI-TET as a meaningful proxy that can \emph{control} both gradient magnitude (trainability) and temporal expressivity, they do not imply that the target quantities must co-vary tightly with MI-TET along the whole phases of learning. A natural direction is then to pursue complementary lower-bound results: if one can show that mutual information--motivated by the same principle--also provides a nontrivial lower bound (perhaps up to controllable scaling factor or residual terms) on  trainability/expressivity-related quantities, the resulting theory would become substantially more complete and predictive. 

Moreover, the current prescreening score construction is derived from relatively simple probabilistic proposition and relies on strong concentration-type assumptions on initialization-time MI-TET. Strengthening the theoretical foundations of this score remains open: promising routes may include weakening the tail/concentration assumptions or incorporating PQC structure-specific features into its mathematical formulation etc.

A second major limitation is empirical: due to the well-known bottlenecks of current quantum backends and the cost of repeated quantum circuit evaluations, our experiments were necessarily confined to a relatively simple benchmark and a restricted set of PQC configurations. It remains to be tested whether the qualitative behaviors observed here--e.g., MI-TET transitions across exploration-to-exploitation regimes, the co-movement of the dominant trainability factors, and the role of the residual term in expressivity--persist under harder control tasks, richer observation spaces, longer horizons, different exploration schedules, and more diverse PQC architectures (including deeper circuits and alternative measurement schemes). 

Beyond these extensions, a particularly ambitious open direction is to `quantize' MI-TET itself. While our current MI-TET is a purely classical statistic estimated from trajectories, a genuinely quantum RL setting would allow the action register, the environment interaction, and the reward encoding to be all treated as quantum systems. This motivates defining a quantum MI-TET via quantum mutual information, e.g.,
\begin{align}
    & \operatorname{qMI\text{-}TET} := \mathbb{E}_s\left[I_q(A : \tilde{Y})_{\rho^{(s)}_{A\tilde{Y}}}\right] \\
    & I_q(A : \tilde{Y}) = S(\rho_A) + S(\rho_{\tilde{Y}}) - S(\rho_{A\tilde{Y}}),
\end{align}
where $\rho_{A\tilde{Y}}^{(s)}$ is the reduced state obtained by coherently coupling the policy's action register with a reward encoding register under a fixed state $s$. Estimating these von Neumann entropies on hardware is nontrivial, but recent neural estimators such as QMINE \cite{QMINE}--based on the quantum Donsker-Varadhan representation--provide a variational route to learn $S(\rho)$ (and hence $I_q$). Embedding the variational policy and a reward-bearing environment into a joint quantum state and training a QMINE-style estimator alongside policy optimization could therefore enable monitoring trainability and temporal expressivity directly and naturally in the quantum domain.

Another promising avenue is to couple MI-TET with quantum resource measures, thereby moving toward namely resource-aware quantum reinforcement learning. In distributed quantum RL scenarios, where the agent and environment reside a separate quantum nodes connected by constrained quantum channels, an effective policy should not only achieve high expected reward but also respect limitations on communication and entanglement \cite{HaldarRepeatersRL2024}. Quantum uncommon information--operationally define as the minimal quantum communication cost required to exchange two quantum states \cite{OppenheimWinterUncommon}--provides a natural notion of such a cost, and recent work has proposed QDVR-based neural estimators for upper and lower bounds on this quantity \cite{QUINE}. Combining these tools, one can envision a multi-objective quantum RL framework where MI-TET serves as a information-theoretic indicator of learnability and temporal expressivity, while neural estimates of quantum uncommon information quantify the communication resources consumed by the policy.

\section*{Acknowledgement}
Our special thanks go to Junghee Ryu (Korean Institute of Science and Technology Information, KISTI) and Harris Junseo Lee (Seoul National University) for their generous feedback and critical comments. Their contributions were instrumental in refining the core ideas and enhancing the overall depth of this work. This work was supported by the National Research Foundation of Korea (NRF) through a grant funded by the Ministry of Science and ICT (Grants Nos. RS-2025-00515537 and RS-2023-00211817), the Institute for Information \& Communications Technology Promotion (IITP) grant funded by the Korean government (MSIP) (Grants Nos. RS-2025-02304540 and RS-2019-II190003), the National Research Council of Science \& Technology (NST) (Grant No. GTL25011-000), and the Korea Institute of Science and Technology Information (Grant No. P25026). We acknowledge the Yonsei University Quantum Computing Project Group for providing support and access to the Quantum System One (Eagle Processor), which is operated at Yonsei University. \\

\section*{Data availability}
The code and data used in this work are available at the following GitHub link: \url{https://github.com/absolute-injury/MI-TET/tree/main}
\label{githubLink}

\section*{Declaration of competing interest}
The authors declare that they have no known competing financial interests or personal relationships that could have appeared to influence the work reported in this paper. \\

\section*{Author Contribution}
J.J. contributed to this work, undertaking the primary responsibilities, including the development of the main ideas, mathematical proofs, initial drafting, and revisions of the paper. D.J., and K.J. provided valuable feedback in shaping the core ideas and overall direction of the work, while K.J. also supervised the whole research. All authors discussed the results and contributed to the final paper.

\bibliographystyle{unsrt}
\bibliography{references}

@article{DeepLearning2015,
    author = {LeCun, Yann and Bengio, Yoshua and Hinton, Geoffrey},
    title = {{Deep learning}},
    journal = {Nature},
    volume = {521},
    pages = {436--444},
    year = {2015},
    doi = {10.1038/nature14539}
}

@misc{Cs229,
  author       = {Ng, Andrew and Ma, Tengyu},
  title        = {{CS229 Lecture Notes}},
  howpublished = {Stanford University},
  year         = {2023},
  note         = {Available at \url{https://cs229.stanford.edu/main_notes.pdf}}
}

@inproceedings{ReUploadingPQC,
  author    = {Jerbi, Sofiene and Gyurik, Casper and Marshall, Simon C. and Briegel, Hans J. and Dunjko, Vedran},
  title     = {Parametrized Quantum Policies for Reinforcement Learning},
  booktitle = {Advances in Neural Information Processing Systems (NeurIPS)},
  volume    = {34},
  year      = {2021},
  note      = {arXiv:2103.05577}
}

@article{HEG,
  author    = {Sim, Sukin and Johnson, Peter D. and Aspuru-Guzik, Al{\'a}n},
  title     = {{Expressibility and Entangling Capability of Parameterized Quantum Circuits for Hybrid Quantum-Classical Algorithms}},
  journal   = {Advanced Quantum Technologies},
  volume    = {2},
  pages     = {1900070},
  year      = {2019},
  doi       = {10.1002/qute.201900070}
}

@article{FIS,
  author    = {Abbas, Amira and Sutter, David and Zoufal, Christa and Lucchi, Aur{\'e}lien and Figalli, Alessio and Woerner, Stefan},
  title     = {{The Power of Quantum Neural Networks}},
  journal   = {Nature Computational Science},
  volume    = {1},
  pages     = {403--409},
  year      = {2021},
  doi       = {10.1038/s43588-021-00084-1}
}

@misc{NewRef,
  author       = {Shao, Yuguo and Chen, Zhengyu and Wei, Zhaohui and Liu, Zhengwei},
  title        = {{Diagnosing Quantum Circuits: Noise Robustness, Trainability, and Expressibility}},
  year         = {2025},
  eprint       = {2509.11307},
  archivePrefix = {arXiv},
  primaryClass = {quant-ph},
  note         = {Available at \url{https://arxiv.org/abs/2509.11307}}
}

@article{baseline1,
  author    = {Greensmith, Evan and Bartlett, Peter L. and Baxter, Jonathan},
  title     = {{Variance reduction techniques for gradient estimates in reinforcement learning}},
  journal   = {Journal of Machine Learning Research},
  volume    = {5},
  pages     = {1471--1530},
  year      = {2004}
}

@misc{LearningTheory,
  author       = {Suh, Namjoon and Cheng, Guang},
  title        = {{A Survey on Statistical Theory of Deep Learning: Approximation, Training Dynamics, and Generative Models}},
  year         = {2024},
  eprint       = {2401.07187},
  archivePrefix = {arXiv},
  primaryClass = {stat.ML},
  note = {Available at \url{https://arxiv.org/abs/2401.07187}}
}

@inproceedings{approximation,
  author    = {Suh, Namjoon and Zhou, Tian-Ti and Huo, Xiaoming},
  title     = {Approximation and non-parametric estimation of functions over high-dimensional spheres via deep ReLU networks},
  booktitle = {International Conference on Learning Representations (ICLR)},
  year      = {2023}
}

@article{bias-variance,
  author    = {Belkin, Mikhail and Hsu, Daniel and Ma, Siyuan and Mandal, Soumik},
  title     = {{Reconciling modern machine-learning practice and the classical bias-variance tradeoff}},
  journal   = {Proceedings of the National Academy of Sciences},
  volume    = {116},
  number    = {32},
  pages     = {15849--15854},
  year      = {2019},
  doi       = {10.1073/pnas.1903070116}
}

@inproceedings{ClassicalTrainability,
  author    = {Pascanu, Razvan and Mikolov, Tomas and Bengio, Yoshua},
  title     = {{On the difficulty of training recurrent neural networks}},
  booktitle = {Proceedings of the 30th International Conference on Machine Learning (ICML)},
  volume    = {28},
  pages     = {1310--1318},
  publisher = {PMLR},
  year      = {2013}
}

@inproceedings{Adam,
  author    = {Kingma, Diederik P. and Ba, Jimmy},
  title     = {{Adam: A Method for Stochastic Optimization}},
  booktitle = {International Conference on Learning Representations (ICLR)},
  year      = {2015}
}

@article{BarrenPlateaus,
  author    = {McClean, Jarrod R. and Boixo, Sergio and Smelyanskiy, Vadim N. and Babbush, Ryan and Neven, Hartmut},
  title     = {{Barren plateaus in quantum neural network training landscapes}},
  journal   = {Nature Communications},
  volume    = {9},
  pages     = {4812},
  year      = {2018},
  doi       = {10.1038/s41467-018-07090-4}
}

@article{QMINE,
  author    = {Shin, Myeongjin and Lee, Junseo and Jeong, Kabgyun},
  title     = {{Estimating quantum mutual information through a quantum neural network}},
  journal   = {Quantum Information Processing},
  volume    = {23},
  pages     = {57},
  year      = {2024},
  doi       = {10.1007/s11128-023-04253-1}
}

@article{QUINE,
  author    = {Ji, Donghwa and Lee, Junseo and Shin, Myeongjin and Sohn, Ilkwon and Jeong, Kabgyun},
  title     = {{Bounding quantum uncommon information with quantum neural estimators}},
  journal   = {Quantum Science and Technology},
  volume    = {11},
  number    = {1},
  pages     = {015001},
  year      = {2026},
  doi       = {10.1088/2058-9565/ae18f4}
}

@book{DeepLearningBook,
  title     = {Deep Learning},
  author    = {Goodfellow, Ian and Bengio, Yoshua and Courville, Aaron},
  publisher = {MIT Press},
  year      = {2016},
  note      = {\url{http://www.deeplearningbook.org}}
}

@misc{OpenAIGym,
  author       = {Brockman, Greg and Cheung, Vicki and Pettersson, Ludwig and Schneider, Jonas and Schulman, John and Tang, Jie and Zaremba, Wojciech},
  title        = {{OpenAI Gym}},
  year         = {2016},
  eprint       = {1606.01540},
  archivePrefix = {arXiv},
  primaryClass = {cs.LG},
  note = {Available at \url{https://arxiv.org/abs/1606.01540}}
}

@book{sutton2018reinforcement,
  title={Reinforcement Learning: An Introduction},
  author={Sutton, Richard S. and Barto, Andrew G.},
  year={2018},
  publisher={MIT press},
  edition={Second}
}

@InProceedings{ExpressivePower,
  title = 	 {On the Expressive Power of Deep Neural Networks},
  author =       {Maithra Raghu and Ben Poole and Jon Kleinberg and Surya Ganguli and Jascha Sohl-Dickstein},
  booktitle = 	 {International Conference on Machine Learning (ICML)},
  volume = {70},
  pages = 	 {2847--2854},
  year = 	 {2017},
  publisher =    {PMLR},
}

@article{ExpressivePower2,
    title = {The expressivity and training of deep neural networks: Toward the edge of chaos?},
    author = {Gege Zhang and Gangwei Li and Weining Shen and Weidong Zhang},
    year = {2020},
    journal = {Neurocomputing},
    volume = {386},
    pages = {8--17},
    doi = {10.1016/j.neucom.2019.12.044}
}

@misc{OppenheimWinterUncommon,
  author        = {Oppenheim, Jonathan and Winter, Andreas},
  title         = {{Uncommon information (the cost of exchanging a quantum state)}},
  year          = {2005},
  eprint        = {quant-ph/0511082},
  archivePrefix = {arXiv},
  primaryClass  = {quant-ph},
  doi           = {10.48550/arXiv.quant-ph/0511082},
  note          = {Available at \url{https://arxiv.org/abs/quant-ph/0511082}}
}

@article{HaldarRepeatersRL2024,
  author        = {Haldar, Stav and Barge, Pratik J. and Khatri, Sumeet and Lee, Hwang},
  title         = {{Fast and reliable entanglement distribution with quantum repeaters: principles for improving protocols using reinforcement learning}},
  journal       = {Phys. Rev. Applied},
  volume        = {21},
  pages         = {024041},
  year          = {2024},
  doi           = {10.1103/PhysRevApplied.21.024041},
  eprint        = {2303.00777},
  archivePrefix = {arXiv},
  primaryClass  = {quant-ph},
  note          = {Available at \url{https://arxiv.org/abs/2303.00777}}
}

@article{WilliamsREINFORCE1992,
  author  = {Williams, Ronald J.},
  title   = {{Simple statistical gradient-following algorithms for connectionist reinforcement learning}},
  journal = {Machine Learning},
  volume  = {8},
  pages   = {229--256},
  year    = {1992},
  doi     = {10.1007/BF00992696}
}

@inproceedings{SuttonPolicyGradient1999,
  author    = {Sutton, Richard S. and McAllester, David A. and Singh, Satinder P. and Mansour, Yishay},
  title     = {{Policy Gradient Methods for Reinforcement Learning with Function Approximation}},
  booktitle = {Advances in Neural Information Processing Systems 12 (NeurIPS 1999)},
  pages     = {1057--1063},
  year      = {1999}
}

\clearpage
\appendix
\crefalias{section}{appendix}
\begin{center}
    \textbf{\large Appendix}
\end{center}

\section{Proof for the Trainability Theorems}
\label{Appendix A}

First of all, we prove our trainability theorem 1. If we fix state to be $s$ then our objective function becomes
\begin{equation}
    \eta_s(\theta) = \sum_a \pi_\theta(a) R(a).
\end{equation}
Then
\begin{equation}
    g_s(\theta) = \nabla_\theta \eta_s(\theta) = \sum_a \nabla_\theta \pi_\theta(a) R(a).
\end{equation}
Here,
\begin{equation}
    \nabla_\theta \pi_\theta(a) = \pi_\theta(a) \nabla_\theta \log \pi_\theta(a) = \pi_\theta(a) S_\theta(a),
\end{equation}
making
\begin{equation}
    g_s(\theta) = \sum_a \pi_\theta(a) S_\theta(a) R(a) = \mathbb{E}\left[S_\theta(A) R(A)\right].
\end{equation}

Next, we show the expected value of score function is equal to zero.
\begin{align}
    \mathbb{E}\left[S_\theta(A)\right] & = \sum_a \pi_\theta(a) \nabla_\theta \log \pi_\theta(a) = \sum_a \nabla_\theta \pi_\theta(a) \\
    & = \nabla_\theta \sum_a \pi_\theta(a) = \nabla_\theta 1 = 0
\end{align}
If we set $m := \mathbb{E}\left[R(A)\right]$ (moreover, let us denote $R := R(A)$ for simplicity from now on),
\begin{align}
    \mathbb{E}\left[S_\theta(A)(R - m)\right] & = \mathbb{E}\left[S_\theta(A) R\right] - m\mathbb{E}\left[S_\theta(A)\right] \\
    & = \mathbb{E}\left[S_\theta(A) R\right] \\
    & = g_s(\theta).
\end{align}

Let us define reward-conditional mean score as follows.
\begin{align}
    \Delta (r) & := \mathbb{E}\left[S_\theta(A) | R = r\right] - \mathbb{E}\left[S_\theta(A)\right] \\
    & = \mathbb{E}\left[S_\theta(A) | R = r\right].
\end{align}
Then, due to the law of total expectations,
\begin{align}
    g_s(\theta) & = \mathbb{E}\left[\mathbb{E}\left[S_\theta(A) (R - m) | R\right]\right] \\
    & = \mathbb{E}\left[(R - m) \mathbb{E}\left[S_\theta(A) | R\right]\right] = \mathbb{E}\left[(R - m) \Delta (R)\right].
\end{align}
According to the Jensen's Inequality,
\begin{equation}
    ||g_s(\theta)|| \leq \mathbb{E}\left[|R - m| ||\Delta (R)||\right].
\end{equation}

Moreover, let $P_{A | R = r}$, $P_A$ denote the conditional probability distribution and the marginal action distribution, respectively.
Then 
\begin{align}
    \Delta (r) & = \mathbb{E}\left[S_\theta(A) | R = r\right] - \mathbb{E}\left[S_\theta(A)\right] \\
    & = \sum_a (P(A = a | R = r) - P(A = a)) S_\theta(a).
\end{align}
We already know $||S_\theta(a)|| \leq G_{\max}$ holds due to the assumption. Therefore,
\begin{align}
    ||\Delta (r)|| & \leq \sum_a |P(A = a | R = r) - P(A = a)| ||S_\theta(a)|| \\
    & \leq G_{\max} \sum_a |P(A = a | R = r) - P(A = a)|.
\end{align}
Total Variation, which is defined as
\begin{equation}
    \operatorname{TV}(P, Q) := \frac{1}{2} \sum_x |P(x) - Q(x)|
\end{equation}
can be expressed as 
\begin{equation}
    \sum_a |P(A = a | R = r) - P(A = a)| = 2\operatorname{TV}(P_{A | R = r}, P_A)
\end{equation}
for this case. This makes
\begin{equation}
    ||\Delta (r)|| \leq 2G_{\max} \operatorname{TV}(P_{A | R = r}, P_A),
\end{equation}
thus
\begin{equation}
    ||g_s(\theta)|| \leq 2G_{\max} \mathbb{E}\left[|R - m| \operatorname{TV}(P_{A | R}, P_A)\right].
\end{equation}

Now we apply expectation-type Cauchy-Schwarz inequality
\begin{equation}
    \mathbb{E}\left[XY\right] \leq \sqrt{\mathbb{E}\left[X^2\right]} \sqrt{\mathbb{E}\left[Y^2\right]},
\end{equation}
as $X = |R - m|$ and $Y = \operatorname{TV}(P_{A | R}, P_A)$.
This will give us
\begin{equation}
    ||g_s(\theta)|| \leq 2G_{\max} \sqrt{\operatorname{Var}(R)} \sqrt{\mathbb{E}\left[\operatorname{TV}(P_{A | R}, P_A)^2\right]}.
\end{equation}

Finally, we apply Pinsker Inequality
\begin{equation}
    \operatorname{TV}(P, Q)^2 \leq \frac{1}{2}D_{KL}(P || Q).
\end{equation}
Then
\begin{equation}
    \mathbb{E}\left[\operatorname{TV}(P_{A | R}, P_A)^2\right] \leq \frac{1}{2}\mathbb{E}\left[D_{KL} (P_{A | R} || P_A)\right] = \frac{1}{2} I(A; R).
\end{equation}
Now our proof is complete. \qed

Next is the proof of the trainability theorem 2. Let us introduce $\widehat{R} := g(\tilde{R})$ and $\epsilon := R - \widehat{R}$, which satisfy $R = \widehat{R} + \epsilon$. Here $\widehat{R}$ denotes the midpoint of each bin, thus
\begin{equation}
    |\epsilon| = |R - g(\tilde{R})| \leq \frac{\Delta}{2}
\end{equation}
naturally holds.

Moreover, we can decompose the gradient term as
\begin{align}
    g_s(\theta) & = \nabla_\theta \mathbb{E}\left[R\right] \\
    & = \nabla_\theta \mathbb{E} \left[\widehat{R}\right] + \nabla_\theta \mathbb{E}\left[\epsilon\right]
\end{align}
making 
\begin{equation}
    ||g_s(\theta)|| \leq \left\| \nabla_\theta \mathbb{E}\left[\widehat{R} \right]\right\| + ||\nabla_\theta \mathbb{E}\left[\epsilon\right]||.
\end{equation}

If we view $\widehat{R} = g(\tilde{R})$ as the new reward-like value, then we can apply theorem 1 to this term. This is validated through the fact that $\widehat{R}$ is just a relabeling of $\tilde{R}$, with mapping function $g$. Thus
\begin{align}
    \left\|\nabla_\theta \mathbb{E}\left[\widehat{R}\right]\right\| & \leq \sqrt{2}G_{\max} \sqrt{\operatorname{Var}(\widehat{R})} \sqrt{I(A; \widehat{R})} \\
    & = \sqrt{2}G_{\max} \sqrt{\operatorname{Var}(g(\tilde{R}))} \sqrt{I(A; \tilde{R})}.
\end{align}

For the second term, by following the exact same logic as in the previous proof, we get the relation
\begin{equation}
    \nabla_\theta \mathbb{E}\left[\epsilon\right] = \mathbb{E}\left[S_\theta(A) \epsilon\right].
\end{equation}
Hence, 
\begin{align}
    ||\nabla_\theta \mathbb{E}\left[\epsilon\right]|| & \leq \mathbb{E}\left[||S_\theta(A)|| |\epsilon|\right] \\
    & \leq G_{\max} \mathbb{E}\left[|\epsilon|\right] \leq G_{\max} \frac{\Delta}{2}
\end{align}

By combining those two terms, the trainability theorem 2 is now proved.

Finally, we formalize the detailed proof structure for the main trainabilty theorem 3.
Write
\begin{equation}
    \alpha_T(\gamma) := \left(\sum_{t=0}^{T - 1} \gamma^t\right)^{-1}
\end{equation}
and 
\begin{equation}
    \eta'(\theta) := \alpha_T(\gamma) \eta(\theta).
\end{equation}
Equivalently, for $\gamma \neq 1$,
\begin{equation}
    \alpha_T(\gamma) = \frac{1 - \gamma}{1 - \gamma^T},
\end{equation}
while for $\gamma = 1$,
\begin{equation}
    \alpha_T(1) = \frac{1}{T}.
\end{equation}

From the finite-horizon policy gradient identity, using the return-to-go notation
\[
Y_t\in\{G_t,Q_t^{\pi_\theta}(S_t,A_t)\},
\]
we have
\[
\nabla_\theta \eta'(\theta)
=
\alpha_T(\gamma)
\sum_{t=0}^{T-1}\gamma^t\,
\mathbb E\!\left[
\nabla_\theta \log \pi_\theta(A_t\mid S_t)\,Y_t
\right].
\]
Define
\[
c_t:=\alpha_T(\gamma)\gamma^t,\qquad t=0,\dots,T-1.
\]
Since
\[
\sum_{t=0}^{T-1} c_t
=
\alpha_T(\gamma)\sum_{t=0}^{T-1}\gamma^t
=1,
\]
the sequence $(c_t)_{t=0}^{T-1}$ is a probability mass function.

Let $J$ be an auxiliary random index, independent of the sampled trajectory, with
\[
\Pr(J=t)=c_t.
\]
Now define
\[
\bar S :=(J,S_J),\quad A:=A_J,\quad Y:=Y_J,\quad \tilde{Y} := \tilde{Y}_J.
\]
Then, by conditioning on the event $\{J=t\}$,
\[
\mathbb E\!\left[
\nabla_\theta \log \pi_\theta(A\mid \bar S)\,Y
\right]
=
\sum_{t=0}^{T-1} c_t\,
\mathbb E\!\left[
\nabla_\theta \log \pi_\theta(A_t\mid S_t)\,Y_t
\right].
\]
Using $c_t=\alpha_T(\gamma)\gamma^t$, we obtain
\[
\nabla_\theta \eta'(\theta)
=
\mathbb E\!\left[
\nabla_\theta \log \pi_\theta(A\mid \bar S)\,Y
\right].
\]

For each fixed augmented state $\bar s$, define
\[
g_{\bar s}(\theta)
:=
\mathbb E\!\left[
\nabla_\theta \log \pi_\theta(A\mid \bar S)\,Y
\mid \bar S=\bar s
\right].
\]
By the law of total expectations,
\begin{equation}
    \nabla_\theta \eta'(\theta) = \mathbb{E}_{\bar S}\left[g_{\bar S}(\theta)\right]
\end{equation}
Hence,
\begin{equation}
    ||\nabla_\theta \eta'(\theta)|| \leq \mathbb{E}_{\bar S}\left[||g_{\bar S}(\theta)||\right]
\end{equation}

Now fix $\bar s$. Conditional on $\bar S = \bar s$, the problem reduces to the one-shot setting of Theorem 2 with reward variable $Y\mid(\bar S = \bar s)$. Therefore,
\begin{align}
    ||g_s(\theta)|| & \leq \sqrt{2}G_{\max} \sqrt{\operatorname{Var}(g(\tilde{Y}) | \bar S = \bar s)}\sqrt{I(A; \tilde{Y} | \bar S = \bar s)} \\
    & + G_{\max} \frac{\Delta}{2}.
\end{align}
Taking expectation with respect to $\bar S$ gives
\begin{align}
    ||\nabla_\theta \eta'(\theta)|| & \leq \sqrt{2}G_{\max} \mathbb{E}_{\bar S}\left[\sqrt{\operatorname{Var}(g(\tilde{Y}) | \bar S = \bar s)} \sqrt{I(A; \tilde{Y} | \bar S = \bar s)}\right] \\
    & + G_{\max} \frac{\Delta}{2}.
\end{align}

Applying the Cauchy--Schwarz inequality with
\[
X(\bar S):=\sqrt{\operatorname{Var}(g(\widetilde Y)\mid \bar S)},
\qquad
Z(\bar S):=\sqrt{I(A;\widetilde Y\mid \bar S)},
\]
we get
\[
\mathbb E_{\bar S}[X(\bar S)Z(\bar S)]
\le
\sqrt{\mathbb E_{\bar S}[X(\bar S)^2]}\,
\sqrt{\mathbb E_{\bar S}[Z(\bar S)^2]}.
\]
That is,
\begin{align}
    & \mathbb E_{\bar S}\!\left[
\sqrt{\operatorname{Var}(g(\widetilde Y)\mid \bar S)}
\sqrt{I(A;\widetilde Y\mid \bar S)}
\right] \\
 & \le
\sqrt{\mathbb E_{\bar S}[\operatorname{Var}(g(\widetilde Y)\mid \bar S)]}
\sqrt{\mathbb E_{\bar S}[I(A;\widetilde Y\mid \bar S)]}.
\end{align}
Therefore,
\begin{align}
    & \|\nabla_\theta \eta'(\theta)\| \\
& \le
\sqrt{2}\,G_{\max}
\sqrt{\mathbb E_{\bar S}[\operatorname{Var}(g(\widetilde Y)\mid \bar S)]}
\sqrt{\mathbb E_{\bar S}[I(A;\widetilde Y\mid \bar S)]} \\
& + G_{\max} \frac{\Delta}{2}.
\end{align}
Finally, by the definition of conditional mutual information,
\[
\mathbb E_{\bar S}[I(A;\widetilde Y\mid \bar S)]
=
I(A;\widetilde Y\mid \bar S).
\]
Hence
\begin{align}
    & \|\nabla_\theta \eta'(\theta)\| \\
& \le
\sqrt{2}\,G_{\max}
\sqrt{\mathbb E_{\bar s}[\operatorname{Var}(g(\widetilde Y)\mid \bar S=\bar s)]}
\sqrt{I(A;\widetilde Y\mid \bar S)} \\
& + G_{\max}\frac{\Delta}{2}.
\end{align}
This proves Theorem 3.
\qed

\section{Practical Estimation of the Initialization-time Prescreening Score}
\label{Appendix B}

We now describe how the initialization-time prescreening score can be estimated in practice under a fixed task and evaluation protocol. For sampled initial parameters $\theta_1, \cdots, \theta_n \stackrel{\mathrm{i.i.d.}}{\sim} \mathcal{D}_{\mathrm{init}}$, we define
\begin{align}
    \widehat{\Gamma}_\epsilon 
    &:= \left[\widehat{\tau}_\epsilon - \widehat{\mu}_M\right]_+ \\
    &= \left[\left(\frac{\epsilon - b}{a\sqrt{\widehat{V}}}\right)^2 - \frac{1}{n}\sum_{i=1}^n M(\theta_i)\right]_+,
\end{align}
where
\begin{equation}
    \widehat{\mu}_M := \frac{1}{n}\sum_{i=1}^n M(\theta_i).
\end{equation}

The empirical mean term $\widehat{\mu}_M$ is directly obtained from the sampled initializations, so the remaining task is to specify a practical approximation for the threshold term $\widehat{\tau}_\epsilon$. To this end, let $\widehat{V}$ denote the empirical 90\% (or 95\%) quantile of the sampled variance values. This quantile-based choice serves as a conservative empirical surrogate for the deterministic variance envelope $\bar{V}$ used in the theoretical bound.

Under this approximation, the threshold term is given by
\begin{equation}
    \widehat{\tau}_\epsilon
    =
    \left(\frac{\epsilon - b}{a\sqrt{\widehat{V}}}\right)^2.
\end{equation}

Recall that $a=\sqrt{2}G_{\max}$ and $b=G_{\max}\Delta/2$, so that the practical computation of $\widehat{\tau}_\epsilon$ depends on the quantities $\epsilon$, $G_{\max}$, $\Delta$, and $\widehat{V}$. Here, the threshold $\epsilon$ is a user-specified parameter that determines the minimum gradient magnitude regarded as operationally meaningful. Since the reward signal $Y$ is discretized by design, the discretization error $\Delta$ is explicitly determined by the chosen binning scheme. Moreover, under a clipped evaluation or training protocol, $G_{\max}$ can be fixed in advance by the prescribed clipping constant $c$. Therefore, all components constituting $\widehat{\tau}_\epsilon$ are either empirically estimable from sampled initializations or explicitly fixed by the evaluation protocol. Consequently, the plug-in score $\widehat{\Gamma}_\epsilon$ is practically computable under a common task setting. \\

\section{Full Correlation Matrix}
\label{Appendix C}

\begin{figure*}
    \centering
    \includegraphics[width=\textwidth]{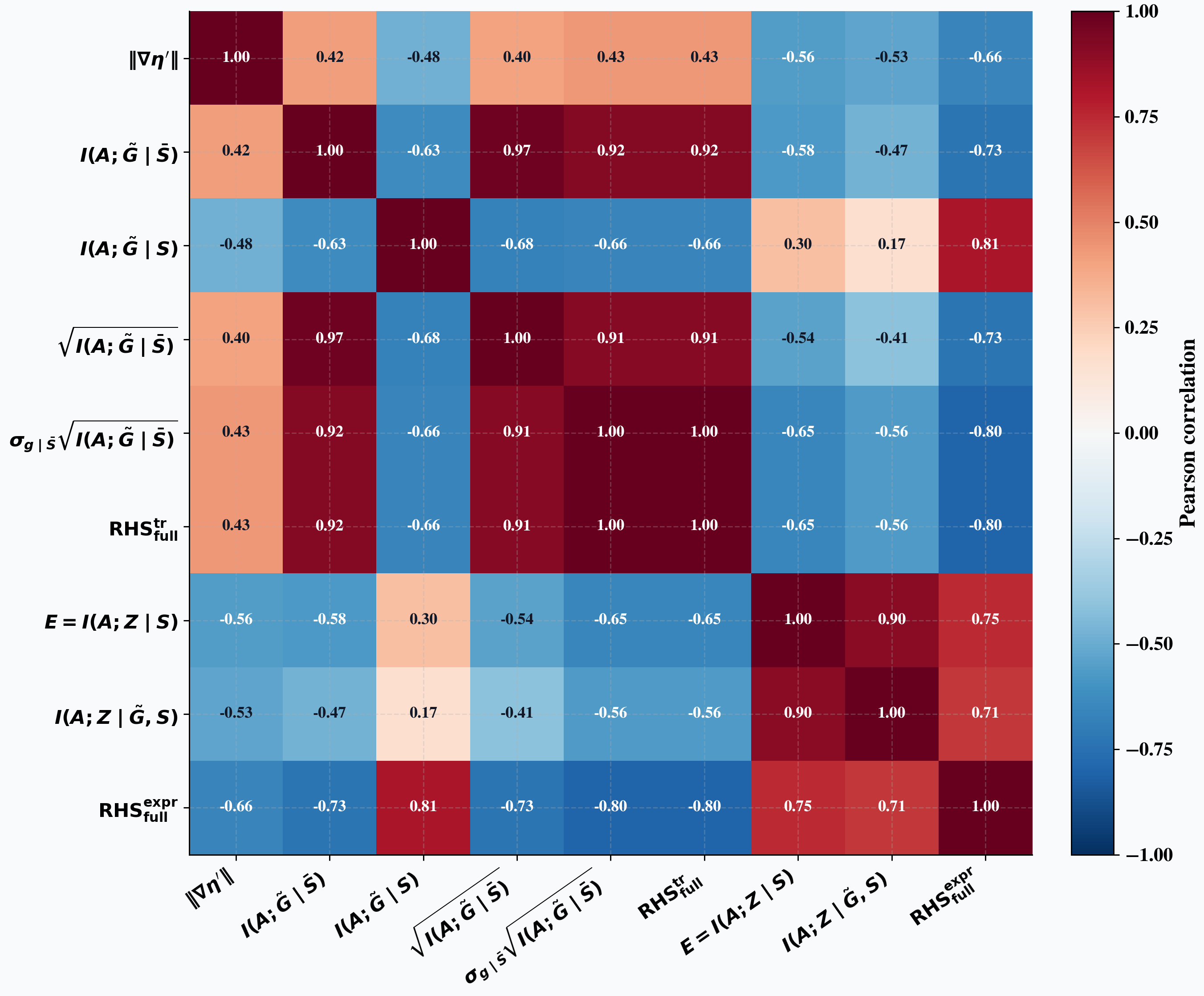}
    \caption{\textbf{Full Correlation Matrix between Various Learning-Related Values}.}
    \label{fig:correlation matrix}
\end{figure*}

\end{document}